\newcommand{\eps}{\epsilon}
\newcommand{\Vvar}[2]{v_{#1}^{(#2)}}
\newcommand{\Uvar}[2]{u_{#1}^{(#2)}}
\newcommand{\Wvar}[2]{w_{#1}^{(#2)}}
\newcommand{\Evar}[2]{e_{#1}^{(#2)}}
\newcommand{\Yvar}[2]{y_{#1}^{(#2)}}
\def\myminus{{\mbox{\protect\raisebox{0.1em}{-}}}}
\newcommand{\true}{\mathsf{True}}
\newcommand{\false}{\mathsf{False}}
\newcommand{\unsat}{\overline{\mathsf{SAT}}}
\newcommand{\NP}{\mathsf{NP}}
\newcommand{\coNP}{\mathsf{coNP}}
\newcommand{\YES}{\mathsf{YES}}
\newcommand{\NO}{\mathsf{NO}}
\newcommand{\alphabeta}{\mathsf{[\alpha,\beta]}}
\newcommand{\Bsat}{{\mathsf{B}}}
\newcommand{\Dsat}{\mathsf{D}}
\newcommand{\cnf}{\Psi}
\newcommand{\sat}{\mathsf{SAT}}
\newcommand{\threesat}{\rm{3}\myminus\mathsf{SAT}}
\newcommand{\gapsat}{\mathsf{\forall\exists}\myminus\mathsf{SAT}}
\newcommand{\gapthreesat}{\mathsf{\forall\exists}\myminus{\rm{3}}\myminus\mathsf{SAT}}
\newcommand{\gapKthreesat}{{\mathsf{(\forall\exists)}^r}\myminus{\rm{3}}\myminus{\mathsf{SAT}}}
\newcommand{\gapKthreesatB}{{(\forall\exists)}^r\myminus{\rm{3}}\myminus{\mathsf{SAT}}\myminus\Bsat}
\newcommand{\EgapKthreesatB}{\mathsf{\exists(\forall\exists)}^r\myminus{\rm{3}}\myminus\mathsf{SAT}\myminus\Bsat}
\newcommand{\gapKDsatB}{{(\forall\exists)}^r\myminus{\rm{\Dsat}}\myminus{\mathsf{SAT}}\myminus\Bsat}
\newcommand{\gapthreesatB}{\mathsf{\forall\exists}\myminus{\rm{3}}\myminus\mathsf{SAT}\myminus\Bsat}
\newcommand{\gapDsat}{\mathsf{\forall\exists}\myminus\rm{\Dsat}\myminus\mathsf{SAT}}
\newcommand{\gapDsatB}{\mathsf{\forall\exists}\myminus\rm{\Dsat}\myminus\mathsf{SAT}\myminus\Bsat}
\newcommand{\gapDsatBX}{\mathsf{\forall\exists}\myminus\rm{\Dsat}\myminus\mathsf{SAT}\myminus\Bsat_\forall}
\newcommand{\CNF}{\mathsf{CNF}}
\runningauthor{I.~Haviv, O.~Regev, and A.~Ta-Shma}
\runningtitle{On the Hardness of Satisfiability with Bounded
Occurrences}
\begin{document}

\begin{frontmatter}[classification=float]

\title{On the Hardness of Satisfiability with Bounded
Occurrences in the Polynomial-Time Hierarchy}

\tocpdftitle{On the Hardness of Satisfiability with Bounded
Occurrences in the Polynomial-Time Hierarchy}

\tocpdfauthor{Ishay Haviv, Oded Regev, Amnon Ta-Shma}

\author[ishay]{Ishay Haviv}

\author[oded]{Oded Regev\thanks{Supported by an Alon Fellowship, by the Binational Science
Foundation, by the Israel Science Foundation, and
 by the EU Integrated Project QAP.}}

\author[amnon]{Amnon Ta-Shma\thanks{Supported by the Binational Science Foundation, by the Israel Science Foundation, and
 by the EU Integrated Project QAP.}}

\tockeywords{satisfiability, polynomial-time hierarchy, expander graphs, superconcentrator graphs}

\begin{abstract}
In 1991, Papadimitriou and Yannakakis gave a reduction implying the
$\NP$-hardness of approximating the problem $\threesat$ with bounded
occurrences. Their reduction is based on expander graphs. We present
an analogue of this result for the second level of the
polynomial-time hierarchy based on superconcentrator graphs. This
resolves an open question of Ko and Lin (1995) and should be useful
in deriving inapproximability results in the polynomial-time
hierarchy.

More precisely, we show that given an instance of $\gapthreesat$ in
which every variable occurs at most $\Bsat$ times (for some absolute
constant $\Bsat$), it is $\Pi_2$-hard to distinguish between the
following two cases: $\YES$ instances, in which for any assignment
to the universal variables there exists an assignment to the
existential variables that satisfies {\em all} the clauses, and
$\NO$ instances in which there exists an assignment to the universal
variables such that any assignment to the existential variables
satisfies at most a $1-\eps$ fraction of the clauses. We also
generalize this result to any level of the polynomial-time
hierarchy.
\end{abstract}

\tocams{03D15, 68Q17}
\tocacm{F.1.3}

\end{frontmatter}

\section{Introduction}\label{sec:intro}

In the problem $\gapthreesat$, given a $3\myminus\CNF$ formula we
have to decide whether for any assignment to a set of universal
variables $X$ there exists an assignment to a set of existential
variables $Y$, such that the formula is satisfied. Here, by a
$3\myminus\CNF$ formula we mean a conjunction of clauses where each
clause is a disjunction of at most $3$ literals. This problem is a
standard $\Pi_2$-complete problem. We denote the corresponding gap
problem by $\gapthreesat[1-\eps_1,1-\eps_2]$ where $0 \leq \eps_2 <
\eps_1 \leq 1$. This is the problem of deciding whether for any
assignment to the universal variables there exists an assignment to
the existential variables such that at least a $1-\eps_2$ fraction
of the clauses are satisfied, or there exists an assignment to the
universal variables such that any assignment to the existential
variables satisfies at most a $1-\eps_1$ fraction of the clauses.
The one-sided error gap problem $\gapthreesat[1-\eps,1]$ is
$\Pi_2$-hard for some $\eps>0$, as was shown in~\cite{KoLin94}. This
problem has the perfect completeness property, \ie, in $\YES$
instances it is possible to satisfy \emph{all} the clauses.

In this paper we consider a restriction of $\gapthreesat$, known as
$\gapthreesatB$. Here, each variable appears at most $\Bsat$ times
where $\Bsat$ is some constant. In~\cite{KoLin95}, Ko and Lin showed
that $\gapthreesatB[1-\eps_1,1-\eps_2]$ is $\Pi_2$-hard for some
constants $\Bsat$ and $0<\eps_2<\eps_1<1$. Our main result is that
the problem is still $\Pi_2$-hard for some $\eps_1>0$ with
$\eps_2=0$, \ie, with perfect completeness. This solves an open
question given in~\cite{KoLin95}.

\begin{theorem}\label{thm:gapEthreesatB}
The problem $\gapthreesatB[1-\eps,1]$ is $\Pi_2$-hard for some
constants $\Bsat$ and $\eps>0$. Moreover, this is true even when the
number of literals in each clause is exactly $3$.
\end{theorem}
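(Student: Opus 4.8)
The plan is to mimic the Papadimitriou--Yannakakis degree-reduction argument, but replace the expander (which only works for the $\NP$ case because one needs a ``majority-vote'' gadget that can flip to the unique good value) by a superconcentrator, which is what makes the alternation of quantifiers go through. Starting from a $\gapthreesat[1-\eps,1]$ instance $\varphi$ with universal variables $X$ and existential variables $Y$ that is $\Pi_2$-hard (Ko--Lin), the issue is that a single variable $z$ may occur $m$ times. The standard fix is to create $m$ fresh copies $z_1,\dots,z_m$, use the $i$-th copy in the $i$-th occurrence, and add ``equality-enforcing'' clauses on the copies. To keep the occurrence count bounded, the equality constraints must themselves form a bounded-degree graph $G$ on the copies; for the reduction to preserve the gap one needs that any assignment to the copies that is ``far'' from constant violates a constant fraction of the edge-constraints of $G$ --- i.e., $G$ must be an expander. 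This handles existential variables exactly as in the $\NP$ case.

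The new difficulty, and the step I expect to be the main obstacle, is the \emph{universal} variables. For a universal variable $z$ we are no longer free to choose the copies; the adversary controlling the universal quantifier will try to set the copies $z_1,\dots,z_m$ inconsistently so as to (a) make the equality-constraint clauses on $G$ the ones that are ``satisfiable'' and thereby hide a violation elsewhere, or (b) feed genuinely inconsistent values into the different occurrences of $z$ in $\varphi$. An expander does not suffice here: we need that for \emph{every} setting of the ``boundary'' copies (the ones actually plugged into clauses of $\varphi$) there is an \emph{internal} assignment to the remaining copy-variables of $G$ --- which we will make existential --- that satisfies all the equality-type constraints of $G$ unless the boundary values are already inconsistent, and moreover that this can be done while reading each boundary value only $O(1)$ times. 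This ``for every boundary assignment there exists a consistent internal routing'' property is exactly a superconcentrator. So the construction is: for each universal variable, build a bounded-degree superconcentrator whose inputs and outputs are the $m$ occurrence-copies (all universal) and whose internal nodes are existential, with constraints along edges forcing values to propagate; the superconcentrator property guarantees that if the $m$ universal copies are ever set to two different values, some vertex-disjoint set of input-output paths is ``blocked'', forcing a clause violation, whereas if they are set consistently the existential player can route the common value through and satisfy everything.

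The key steps, in order, would be: (1) recall the Ko--Lin hardness of $\gapthreesat[1-\eps,1]$ as the starting point, and record that it has perfect completeness; (2) for each existential variable, apply the Papadimitriou--Yannakakis expander gadget to its occurrence-copies, using constant-degree expanders and the fact that an expander's edge-constraints force near-constant assignments or lose a constant fraction; (3) for each universal variable, introduce the superconcentrator gadget described above on the occurrence-copies (kept universal) plus fresh existential internal variables, with edge-propagation clauses; (4) verify completeness: given a ``for all $\exists$'' strategy for $\varphi$, show it lifts to one for the new instance --- universal copies equal the original value, existential copies and superconcentrator internals route it, all clauses satisfied, so completeness stays perfect; (5) verify soundness: given a bad universal strategy for $\varphi$ losing an $\eps$ fraction, show the induced strategy on the universal copies (set each block to the original bad value) forces, for every existential response, a constant fraction of clauses to fail --- here one argues that either the boundary copies of some block are inconsistent, in which case the superconcentrator property yields $\Omega(1)$ violated propagation-clauses, or they are consistent with a single value, in which case the original violated clauses of $\varphi$ survive up to the blow-up factor; (6) bound the occurrence count: since expanders and superconcentrators of bounded degree exist and each original clause and each gadget edge contributes $O(1)$ occurrences per variable, conclude $\Bsat = O(1)$. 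Finally, to get exactly $3$ literals per clause, pad short clauses with fresh existential dummy variables in the standard way, which changes the parameters only by constants.

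The main obstacle is step (3)--(5): correctly formulating the superconcentrator-based consistency gadget for the universal variables and proving the soundness direction, since one must argue against an adversary who controls both the boundary copies and (via the gap instance) the original violated clauses, and show these cannot cancel. This is precisely where a mere expander fails and the stronger ``every input-output pattern is simultaneously routable'' guarantee of a superconcentrator is essential; quantitatively one needs the superconcentrator to have \emph{linear} size and \emph{bounded} degree (which is known to exist) so that the blow-up and hence the loss in $\eps$ is only a constant factor. The generalization to higher levels $\Pi_r$ then follows by iterating the construction level by level, applying the superconcentrator gadget at each universally quantified block and the expander gadget at each existentially quantified block.
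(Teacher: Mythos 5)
You have the right ingredients (Ko--Lin's $\gapthreesat[1-\eps,1]$ as the starting point, Papadimitriou--Yannakakis expanders for the existential variables, a bounded-degree linear-size superconcentrator gadget for the universal variables), but the architecture of your universal-variable gadget is wrong in a way that breaks exactly the property the theorem is about: perfect completeness. You propose to keep the occurrence-copies of a universal variable $x$ universal, plug them directly into the clauses of $\varphi$, and use the superconcentrator (with existential internal nodes) so that ``if the copies are set to two different values, some paths are blocked, forcing a clause violation.'' But in a $\YES$ instance the adversary is free to set the copies inconsistently, and then either your gadget clauses are violated (so completeness is not perfect) or the original clauses receive inconsistent values for $x$, for which the $\YES$ guarantee of $\varphi$ says nothing. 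This is precisely the obstacle that kept Ko--Lin at two-sided error, and your sketch does not resolve it.

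The paper's gadget is structured differently, and the difference is essential. For a variable with $\ell$ occurrences it uses a graph with $2\ell$ \emph{inputs}, which become the new universal variables (each appearing only in the consistency clause of its single outgoing edge, hence twice), and $\ell$ \emph{outputs}, which are new \emph{existential} variables and are what actually replace the occurrences of $x$ in the clauses of $\varphi$; edges and internal vertices are also existential. The doubling $|U|=2|V|$ is the point: any adversarial assignment to the $2\ell$ inputs has a majority value held by at least $\ell$ of them, and the superconcentrator property (any $\ell$-subset of inputs routes by vertex-disjoint paths to \emph{all} $\ell$ outputs) lets the existential player propagate that single majority value to every occurrence. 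So the superconcentrator is used in the \emph{completeness} direction, to turn an arbitrary (possibly inconsistent) universal assignment into a consistent value for $\varphi$; soundness is then the easier direction, where the adversary sets all $2\ell$ inputs equal and one shows (via a one-to-one charging argument, not just ``$\Omega(1)$ violated clauses per block'') that each output deviating from that value costs at least one unsatisfied gadget clause, matching the at most one major clause it could save. Without making the occurrence-copies existential and doubling the universal inputs, your step (4) cannot be carried out, so the proposal as written has a genuine gap.
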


We note that the problem remains $\Pi_2$-hard even if the number of
occurrences of universal variables is bounded by $2$ and the number
of occurrences of existential variables is bounded by $3$. As we
will explain later, these are the least possible constants for which
the problem is still $\Pi_2$-hard unless the polynomial-time
hierarchy collapses. We believe that \thmref{thm:gapEthreesatB} is
useful for deriving $\Pi_2$-hardness results, as well as $\Pi_2$
inapproximability results. In fact, \thmref{thm:gapEthreesatB} was
crucial in a recent proof that the covering radius problem on
lattices with high norms is $\Pi_2$-hard~\cite{CRPhardS06}.
Moreover, using \thmref{thm:gapEthreesatB}, one can simplify the
proof that the covering radius on codes is $\Pi_2$-hard to
approximate~\cite{GuruswamiMR04}.

At a very high level, the proof is based on the following ideas.
First, one can reduce the number of occurrences of existential
variables by an expander construction in much the same way as was
done by Papadimitriou and Yannakakis~\cite{PapaYanna91}. The main
difficulty in the proof is in reducing the number of occurrences of
universal variables: If we duplicate universal variables (as is
usually done in order to reduce the number of occurrences), we have
to deal with inconsistent assignments to the new universal variables
(this problem shows up in the completeness proof). The approach
taken by Ko and Lin~\cite{KoLin95} is to duplicate universal
variables and to add existential variables on top of the universal
variables. Their construction, in a way, enables the existential
variables to override inconsistent assignments to the universal
variables. Unfortunately, it seems that this technique cannot
produce instances with perfect completeness. In our approach we also
duplicate the universal variables, but instead of using them
directly in the original clauses, we use a superconcentrator-based
gadget, whose purpose is intuitively to detect the majority among
the duplicates of a universal variable. Crucially, this gadget
requires only a constant number of occurrences of each universal
variable.

The rest of the paper is organized as follows. \secref{sec:prem}
provides some background about satisfiability problems in the second
level of the polynomial-time hierarchy and about some explicit
expanders and superconcentrators. In \secref{sec:sat} we prove
\thmref{thm:gapEthreesatB}. \secref{sec:numberOfOcc} discusses the
least possible value of $\Bsat$ for which the problem remains
$\Pi_2$-hard. In \secref{sec:HigherLevelsOfTheHierarchy} we
generalize our main theorem to any level of the polynomial-time
hierarchy.

\section{Preliminaries}\label{sec:prem}

\subsection{$\Pi_2$ satisfiability problem}

A $\Dsat\myminus\CNF$ formula over a set of variables is a
conjunction of clauses where each clause is a disjunction of {\em at
most} $\Dsat$ literals. Each literal is either a variable or its
negation. A clause is satisfied by a Boolean assignment to the
variables if it contains at least one literal that evaluates to
$\true$.

For any reals $0\leq\alpha<\beta\leq 1$ and positive integer
$\Dsat>0$, we define:
\begin{definition}[$\gapDsat{\alphabeta}$]\label{def:Pi2SAT}
An instance of $\gapDsat{\alphabeta}$ is a $\Dsat\myminus\CNF$
Boolean formula $\cnf(X,Y)$ over two sets of variables. We refer to
variables in $X$ as universal variables and to those in $Y$ as
existential variables. In $\YES$ instances, for every assignment to
$X$ there exists an assignment to $Y$ such that at least a $\beta$
fraction of the clauses are satisfied. In $\NO$ instances, there
exists an assignment to $X$ such that for every assignment to $Y$ at
most an $\alpha$ fraction of the clauses are satisfied.
\end{definition}

\begin{sloppypar}
The problem $\gapDsat{\alphabeta}$ is the basic approximation
problem in the second level of the polynomial-time hierarchy
(see~\cite{SchaeferUmans02a,SchaeferUmans02b} for a recent survey on
the topic of completeness and hardness of approximation in the
polynomial-time hierarchy). We also define some additional variants
of the above problem. For any $\Bsat\geq 1$ the problem
$\gapDsatB{\alphabeta}$ is defined similarly except that each
variable occurs at most $\Bsat$ times in $\cnf$. In the instances of
the problem $\gapDsatBX{\alphabeta}$, the bound $\Bsat$ on the
number of occurrences applies only to the universal variables (as
opposed to all variables).
\end{sloppypar}

In~\cite{KoLin95} it was shown that
$\gapthreesatB[1-\eps_1,1-\eps_2]$ is $\Pi_2$-hard for some $\Bsat$
and some  $0<\eps_2<\eps_1<1$. As already mentioned, in
\secref{sec:sat} we show that it is $\Pi_2$-hard even for some
$\Bsat$, $\eps_1>0$ and $\eps_2=0$.

\subsection{Expanders and superconcentrators}

In this subsection, we gather some standard results on explicit
constructions of expanders and superconcentrators (where by {\em
explicit} we mean constructible in polynomial time). The first shows
the existence of certain regular expanders.

\begin{lemma}[\cite{LubotzkyPhSa88,Margulis88}]\label{lem:expandersStep2}
There exists a universal constant $C_1$ such that for any integer
$n$, there is an explicit $14$-regular graph $G=(V,E)$ with $n\leq
|V|\leq C_1 n$ vertices, such that any nonempty set $S\subset V$
satisfies $|E(S,\overline{S})|>\min(|S|,|\overline{S}|)$.
\end{lemma}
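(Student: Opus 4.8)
The plan is to derive the claimed family of $14$-regular expanders from a known explicit construction of Ramanujan graphs. Recall that for any prime $p \equiv 1 \pmod 4$ the Lubotzky--Phillips--Sarnak (and independently Margulis) construction yields an explicit $(p+1)$-regular graph whose nontrivial eigenvalues are bounded in absolute value by $2\sqrt{p}$; taking $p = 13$ gives an explicit $14$-regular graph $H = (V_0, E_0)$ on $N$ vertices with second eigenvalue $\lambda(H) \le 2\sqrt{13} < 8$. The first step is thus to fix this construction and record the spectral bound $\lambda(H) < 8 = 14/\oneHalf$ — the precise numeric slack here is what will give the edge-expansion constant $1$ in the conclusion.

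The second step is the standard spectral-to-combinatorial (expander mixing / Cheeger-type) estimate: for a $d$-regular graph on $m$ vertices with second eigenvalue $\lambda$, any set $S$ with $|S| = s$ satisfies
\[
  |E(S,\overline S)| \;\ge\; \frac{(d-\lambda)\, s\,(m-s)}{m}.
\]
Plugging in $d = 14$ and $\lambda < 8$ gives $|E(S,\overline S)| > \tfrac{6 s (m-s)}{m} \ge 6 \min(s, m-s) \cdot \tfrac{\max(s,m-s)}{m} \ge 3\min(s,m-s)$, which comfortably beats the required bound $|E(S,\overline S)| > \min(|S|,|\overline S|)$ with room to spare. (I would state the mixing inequality as a known fact; its proof is the usual Rayleigh-quotient argument applied to the indicator vector of $S$ projected orthogonally to the all-ones vector.)

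The third and only genuinely fiddly step is arranging the vertex count to land in the window $[n, C_1 n]$ for an absolute constant $C_1$, since the LPS construction only produces graphs on certain special numbers of vertices (roughly $p(p^2-1)$ for varying primes $p$ in an arithmetic progression, or $p(p^2-1)/2$ in the bipartite case). By Bertrand's postulate, or more simply by the fact that the relevant primes are not too sparse, consecutive achievable sizes differ by at most a constant factor; given a target $n$, pick the smallest achievable size $N \ge n$, and then $N \le C_1 n$ for a fixed $C_1$. If one instead wants an \emph{exact} vertex count (the lemma as stated does not), a cleaner route is to take \emph{any} explicit constant-degree expander family with $\lambda/d$ bounded away from $1$, replace each vertex by a short path or small gadget to tune the count, and then apply a derandomized-squaring / replacement-product step or simply a constant number of edge additions to restore $14$-regularity while only improving expansion; but for the version stated, the ``round up to the next available prime'' argument suffices and is the main obstacle to write carefully.

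I expect the spectral input to be entirely off-the-shelf and the mixing bound to be routine; the one place to be attentive is the arithmetic that converts the spectral gap into the stated edge-expansion inequality with the correct absolute constants, and the number-theoretic bookkeeping that guarantees a valid graph size within a constant factor of any given $n$.
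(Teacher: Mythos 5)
Your derivation matches the paper's situation exactly: the lemma is stated there with no proof at all, only a citation to the Lubotzky--Phillips--Sarnak and Margulis Ramanujan graph constructions, and your route (take $p=13$ to get explicit $14$-regular graphs with $\lambda \le 2\sqrt{13} < 8$, apply the standard spectral edge-expansion bound $|E(S,\overline{S})| \ge (14-\lambda)|S||\overline{S}|/|V| \ge 3\min(|S|,|\overline{S}|)$, and use the density of admissible primes to land the vertex count in $[n, C_1 n]$) is precisely the standard way to unpack that citation. The only blemish is the parenthetical identity ``$8 = 14/1.5$,'' which is false but harmless, since the argument only needs $\lambda < 8$.
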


For the second, we need to define the notion of a superconcentrator.

\begin{definition}[$n$-superconcentrator]
A directed acyclic graph $G=(U \cup V \cup W,E)$ where $U$ denotes a
set of $n$ inputs (\ie, vertices with indegree $0$) and $V$ denotes
a set of $n$ outputs (\ie, vertices with outdegree $0$) is an {\em
$n$-superconcentrator} if for any subset $S$ of $U$ and any subset
$T$ of $V$ satisfying $|S|=|T|$, there are $|S|$ vertex-disjoint
directed paths in $G$ from $S$ to $T$.
\end{definition}

The explicit construction of sparse superconcentrators has been
extensively studied. Gabber and Galil~\cite{GabberGa81} were the
first to give an explicit expander-based construction of
$n$-superconcentrator with $O(n)$ edges. Alon and
Capalbo~\cite{AlonCapalbo} presented the most economical known
explicit $n$-superconcentrators, in which the number of edges is
$44n+O(1)$. Their construction is based on a modification of the
well-known construction of Ramanujan graphs by Lubotzky, Phillips
and Sarnak~\cite{LubotzkyPhSa88} and by Margulis~\cite{Margulis88}.
The following theorem of~\cite{AlonCapalbo} summarizes some of the
properties of their graphs.

\begin{theorem}[\cite{AlonCapalbo}]\label{thm:superconcentrators}
There exists an absolute constant $k>0$ for which the following
holds. For any $n$ of the form $k \cdot 2^l$ $(l\geq 0)$ there
exists an explicit $n$-superconcentrator $H=(U \cup V \cup W,E)$
with $|E|=44n+O(1)$ and all of whose vertices have indegree and
outdegree at most $11$.
\end{theorem}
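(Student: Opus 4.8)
The plan is to follow the classical recursive approach to building linear-size superconcentrators (Valiant, Pippenger, Gabber--Galil) and to insert, at its heart, a bounded-degree bipartite concentrator obtained from Ramanujan graphs; the constants $44$ and $11$ and the restriction to $n = k\cdot 2^l$ then emerge from the bookkeeping. Concretely, to build an $n$-superconcentrator $H$ with input set $U$ and output set $V$, $|U|=|V|=n$, I would add a diagonal matching $u_i \to v_i$; introduce two ``middle'' sets $M_U, M_V$ of size $n/2$; connect $U$ to $M_U$ by an explicit bipartite $(n,n/2)$-concentrator $C$, connect $M_V$ to $V$ by (the reverse of) such a concentrator $C'$; and recursively place an $(n/2)$-superconcentrator between $M_U$ and $M_V$. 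Because $n = k\cdot 2^l$, the recursion descends through sizes $k\cdot 2^{l-1}, k\cdot 2^{l-2}, \ldots$ down to a base case of size $k$, for which one uses a fixed explicit $k$-superconcentrator, with $k$ chosen as the smallest scale at which every gadget used below is available.

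The superconcentrator property is checked by a case split on $s = |S| = |T|$. If $s \le n/2$, route $S$ through $C$ to a size-$s$ subset of $M_U$, across the recursive $(n/2)$-superconcentrator to a size-$s$ subset of $M_V$, and through $C'$ to $T$, using no diagonal edges. If $s > n/2$, then viewing $S$ and $T$ as index sets we have $|S\cap T| \ge 2s - n$: route those common pairs $(u_i, v_i)$ along the diagonal, and route the remaining $s - |S\cap T| \le n - s < n/2$ inputs ($S \setminus T$) to the remaining outputs ($T \setminus S$) through $C$, the recursion, and $C'$ as before. The diagonal paths and the middle paths are vertex-disjoint because they involve disjoint index sets and disjoint internal vertices; in a bipartite concentrator each routing ``path'' is a single matching edge, and its existence for any left-subset of size at most $n/2$ follows from Hall's theorem once $C$ satisfies $|N(S)| \ge |S|$ for all left-sets with $|S| \le n/2$.

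The core step, and the main obstacle, is constructing the explicit bounded-degree bipartite $(n,n/2)$-concentrator with a small enough edge --- hence degree --- budget that the recursion closes at $44n + O(1)$ edges with all indegrees and outdegrees at most $11$. For this one starts from an explicit constant-degree Ramanujan graph (the Lubotzky--Phillips--Sarnak and Margulis construction, as in Lemma~\ref{lem:expandersStep2}), whose optimal second-eigenvalue bound $\lambda \le 2\sqrt{d-1}$ is precisely what is needed: via the expander mixing lemma, or more sharply a Tanner-type deficiency bound, every left-set of size at most $n/2$ has a neighborhood at least as large, and the optimality of $\lambda$ permits the smallest degree $d$, minimizing edges --- anything weaker than Ramanujan expansion inflates the constant. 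Adjusting the vertex count of the Ramanujan graph to the form $k\cdot 2^j$ and balancing the bipartition --- the ``modification'' of the Ramanujan construction --- is exactly what forces the hypothesis $n = k\cdot 2^l$, since such a gadget is needed at every scale down to the base.

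Finally one propagates two quantities through the $O(\log n)$ levels of recursion. For edges, $E(n) = E(n/2) + |E(C)| + |E(C')| + n$, which sums geometrically to $44n + O(1)$ for the chosen gadget parameters. For degrees, the delicate points are the ``seam'' vertices: each input $u_i$ has outdegree (its degree in $C$) plus $1$ for the diagonal edge, each output $v_i$ has indegree (its degree in $C'$) plus $1$, and every middle vertex is simultaneously an endpoint of a concentrator and an input or output of the recursive subgraph, so these contributions must be summed and kept at most $11$, splitting an occasional high-degree vertex into a short path if necessary. Explicitness is immediate, since the Ramanujan graphs, the derived concentrators, the diagonal, and the base gadget are all polynomial-time constructible and the recursion has only logarithmically many levels.
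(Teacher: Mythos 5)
This statement is not proved in the paper at all: it is quoted verbatim from Alon and Capalbo as a black-box citation (the paper only builds its own gadget $G^{(n)}$ \emph{on top of} it in \clmref{claim:oursuperconcentrators}). So the right comparison is between your sketch and the Alon--Capalbo construction itself, and there your proposal has a genuine gap. The recursive skeleton you describe (diagonal matching, bipartite concentrators into a half-size middle layer, recursion, with the routing case split on $|S\cap T|$) is the classical Valiant/Pippenger/Gabber--Galil framework and that part is fine; but the entire content of the theorem is the pair of constants $44$ and $11$, and your argument asserts rather than derives them. A rough check shows the naive bookkeeping does not close: if the concentrator $C$ has average left-degree $d_C$, the edge recursion $E(n)=E(n/2)+|E(C)|+|E(C')|+n$ sums to about $2(2d_C+1)n$, so hitting $44n$ forces $d_C\approx 10$; but then each middle vertex has concentrator degree about $2d_C\approx 20$ \emph{plus} its degree as an input/output of the recursive superconcentrator, far above $11$, and your fallback of ``splitting an occasional high-degree vertex into a short path'' would have to be applied to essentially every middle vertex, adding $\Theta(n)$ edges per level and destroying the $44n+O(1)$ bound.

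The second soft spot is the concentrator itself. You need $|N(S)|\ge |S|$ for every left set of size up to $n/2$ mapping into a right side of size only $n/2$; at $|S|=n/2$ this means the neighborhood must be essentially the whole right side, and the expander mixing lemma or Tanner-type bounds for a constant-degree Ramanujan graph do not deliver that (they give expansion factors strictly below what is needed at linear densities for small $d$). Beating what eigenvalue bounds give---via a modification of the LPS/Margulis graphs and a more careful recursive structure---is precisely the technical contribution of Alon and Capalbo, which is why the paper cites their result rather than reproving it. In short: your outline reproduces the standard paradigm that yields \emph{some} linear-size, bounded-degree explicit superconcentrator, but it does not establish the specific bounds $|E|=44n+O(1)$ and degree at most $11$ claimed in the statement, and those bounds are the statement.
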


In our reduction, we use a slight modification of the
superconcentrator in \thmref{thm:superconcentrators}. This graph is
described in the following claim (see \figref{fig:supercon} for an
illustration of the construction).

\begin{figure}[ht]
  \begin{center}
    \includegraphics[width=4in]{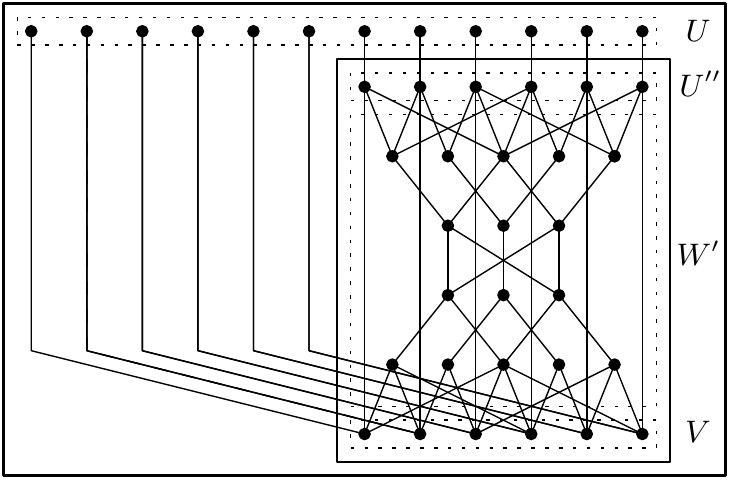}  %
  \end{center}
  \caption{The graph $G^{(6)}$. All edges are directed downwards. The marked subgraph is
 a 6-super\-con\-cen\-trator (but not necessarily the one from~\cite{AlonCapalbo}).}
  \label{fig:supercon}
\end{figure}

\begin{claim}\label{claim:oursuperconcentrators}
There exist absolute constants $c$ and $d$ for which the following
holds. For any natural $n \geq 1$ there exists an explicit directed
acyclic graph $G^{(n)}=(U \cup V \cup W,E)$ with a set $U$ of $2n$
inputs (\ie, vertices with indegree $0$) with outdegree $1$ and a
set $V$ of $n$ outputs (\ie, vertices with outdegree $0$), such that
for any subset $S$ of $U$ of size $|S|=n$ there are $n$
vertex-disjoint directed paths from $S$ to $V$. Moreover, $|E| \leq
cn$ and all indegrees and outdegrees in $G^{(n)}$ are bounded by
$d$.
\end{claim}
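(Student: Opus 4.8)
The plan is to obtain $G^{(n)}$ by a light post-processing of the superconcentrator supplied by \thmref{thm:superconcentrators}, and since that theorem already gives us an explicit object with bounded degrees and $O(n)$ edges, essentially all the work is bookkeeping: padding $n$ up to an admissible size, forcing the inputs to have outdegree exactly $1$, and checking that restricting attention to half of the inputs and to only $n$ of the outputs does not destroy the routing property. The only point that needs a moment's thought is this last one, and it turns out to be immediate from the definition of a superconcentrator because that definition only requires the chosen input set and output set to have equal size.

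First I would fix $m$ to be the smallest integer of the form $k\cdot 2^l$ $(l\ge 0)$ with $m\ge 2n$; then $m\le\max(k,4n)=O(n)$, and by \thmref{thm:superconcentrators} there is an explicit $m$-superconcentrator $H=(U_H\cup V_H\cup W_H,E_H)$ with $|E_H|=44m+O(1)=O(n)$ and all in- and out-degrees at most $11$. Now pick an arbitrary subset $U_0\subseteq U_H$ with $|U_0|=2n$ and an arbitrary subset $V\subseteq V_H$ with $|V|=n$ (possible since $2n\le m$ and $n\le m$). To make the inputs have outdegree exactly $1$, attach to each $u\in U_0$ a fresh vertex $u'$ and a single edge $u'\to u$, and set $U=\set{u'}{u\in U_0}$. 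Let $G^{(n)}$ be the resulting graph, with output set $V$ and with $W$ consisting of all remaining vertices ($U_0$, $W_H$, $V_H\setminus V$, $U_H\setminus U_0$). This is still a DAG since we only prepended new sources; it has $|E_H|+2n=O(n)$ edges, so $|E|\le cn$ for an absolute constant $c$; each $u'$ has outdegree $1$, each $u\in U_0$ now has indegree $1$ and unchanged outdegree $\le 11$, and every other degree is inherited from $H$, so $d=11$ works. If one wants $U$ (resp.\ $V$) to be exactly the set of indegree-$0$ (resp.\ outdegree-$0$) vertices, one can additionally delete every vertex that lies on no directed path from $U$ to $V$; this is harmless, as such a vertex can occur in no routing below, and it only decreases the degrees and the edge count.

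It remains to verify the routing property. Let $S\subseteq U$ with $|S|=n$, and put $S_0=\set{u\in U_0}{u'\in S}\subseteq U_H$, so $|S_0|=n=|V|$. Since $H$ is an $m$-superconcentrator and $S_0,V$ are subsets of its inputs and outputs of the same size, there are $n$ vertex-disjoint directed paths in $H$ from $S_0$ to $V$. Prepending to the path that starts at each $u\in S_0$ the edge $u'\to u$ yields $n$ vertex-disjoint directed paths in $G^{(n)}$ from $S$ to $V$: the prepended vertices are pairwise distinct and appear on no path of $H$, so disjointness is preserved. This establishes all the required properties, and the only genuine subtlety, as noted, is the trivial observation that throwing away $m-n$ of the $m$ outputs costs nothing because the superconcentrator guarantee is phrased in terms of $|S|=|T|$.
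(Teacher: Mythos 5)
Your proof is correct, and it takes a slightly but genuinely different route from the paper's. The paper builds $G^{(n)}$ from an $n_0$-superconcentrator with $n_0\approx n$: it fixes $n$ of its inputs $U''$ and $n$ of its outputs $V$, then adds $2n$ fresh outdegree-$1$ vertices of which the first $n$ are matched to $U''$ and the second $n$ are matched \emph{directly to $V$}. The routing argument there must split $S$ into $S_2$ (reaching $V$ in one step) and $S_1$ (routed through the superconcentrator to the $|S_1|=n-|S_2|$ outputs not already covered), and one has to observe that these two families of paths cannot collide because the vertices of $V$ have outdegree $0$. You instead take a superconcentrator on $m\approx 2n$ inputs, select $2n$ of its inputs and $n$ of its outputs, prepend a fresh source to each selected input, and invoke the superconcentrator property exactly once with the fixed target set $V$; no case analysis is needed. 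Your version is cleaner and its correctness is more immediate; the paper's version uses a superconcentrator of roughly half the size and so yields a smaller constant $c$, which is irrelevant for the claim as stated (though the paper cares about constants elsewhere, e.g., in \secref{sec:numberOfOcc}). Your remark about deleting vertices that lie on no $U$--$V$ path is a nice touch: the paper's construction leaves the same dangling indegree-$0$ vertices in $U'\setminus U''$ and outdegree-$0$ vertices in $V'\setminus V$ and silently ignores them.
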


\begin{proof}
Fix some $n \geq 1$. By \thmref{thm:superconcentrators} there exists
an explicit $n_0$-superconcentrator $H'=(U' \cup V' \cup W',E')$ for
some $n+k \leq n_0 < 2(n+k)$ where $k$ is the constant from
\thmref{thm:superconcentrators}, such that $|E'|=44n_0+O(1)$ and all
its indegrees and outdegrees are bounded by $11$. Denote by
$U''=\{u''_1,\ldots,u''_n\}$ and by $V=\{v_1,\ldots,v_n\}$ arbitrary
subsets of $U'$ and $V'$ of size exactly $n$.

In order to construct the graph $G^{(n)}$ we add to the graph $H'$
the $2n$ vertices $U=\{u_1,\ldots,u_{2n}\}$ and $2n$ edges. The
input set of the graph $G^{(n)}$ is $U$, and the output set of
$G^{(n)}$ is $V$. For each $i \in \{1,\ldots,n\}$ we add the
directed edges $(u_i,u''_i)$ and $(u_{i+n},v_i)$. In other words, we
add to the graph two matchings of size $n$: the first between the
vertex sets $\{u_{1},\ldots,u_{n}\}$ and $U''$, and the second
between $\{u_{n+1},\ldots,u_{2n}\}$ and $V$.

It is easy to see that our graph satisfies the required properties
for large enough absolute constants $c$ and $d$. Let $S \subseteq U$
be of size $n$, and define $S_1 = S \cap \{u_i : 1 \leq i \leq n\}$
and $S_2= S \cap \{u_i : n+1 \leq i \leq 2n\}$. We show that there
exist $n$ vertex-disjoint paths from $S$ to $V$. According to our
construction, the vertices of $S_2$ have paths of length $1$ to
their neighbors in $V$. So it suffices to show that the vertices of
$S_1$ have vertex-disjoint paths to the $n-|S_2|=|S_1|$ remaining
vertices of $V$. According to the property of $H'$, there exist
vertex-disjoint paths in $G^{(n)}$ between the neighbors of $S_1$ in
$U''$ and the $n-|S_2|$ vertices of $V$. Combining these paths
together with the matching edges between $S_1$ and $U''$ completes
the proof.
\end{proof}

\section{\texorpdfstring{Hardness of approximation for $\gapthreesatB$}{Hardness of approximation for Forall-Exists-3-SAT-B}}
\label{sec:sat}

In this section we prove \thmref{thm:gapEthreesatB}. The proof is by
reduction from the problem $\gapthreesat[1-\eps,1]$, which was shown
to be $\Pi_2$-hard for some $\eps>0$ in~\cite{KoLin94}. The
reduction is performed in three steps. The first step is the main
one, and it is here that we present our new superconcentrator-based
construction. The remaining two steps are standard (see for
example~\cite{VaziraniBook} and~\cite{AroraLund96}) and we include
them mainly for completeness. We remark that these two steps are
also used in~\cite{KoLin95}.

\begin{description}
  \item[Step 1:] Here we reduce the number of occurrences of
  each universal variable to at most some constant $\Bsat$. As a
  side effect, the size of the clauses grows from being at most $3$ to being at
  most $\Dsat$, where $\Dsat$ is some constant. More precisely, we establish that there exist absolute constants $\Bsat$, $\Dsat$ and $\eps>0$ such that
  the problem $\gapDsatBX[1-\eps,1]$ is $\Pi_2$-hard.
  \item[Step 2:] Here we reduce the number of occurrences of
  the existential variables to some constant $\Bsat$. Notice that we
  must make sure that this does not affect the number of occurrences
  of the universal variables. More precisely, we show that there exist absolute constants $\Bsat$, $\Dsat$ and $\eps>0$ such
  that the problem $\gapDsatB[1-\eps,1]$ is $\Pi_2$-hard.
  \item[Step 3:] Finally, we modify the formula such that
  the size of the clauses is exactly $3$. Clearly, we must make
  sure that the number of occurrences of each variable remains constant.
  This would complete the proof of \thmref{thm:gapEthreesatB}.
\end{description}

\subsection{Step 1}

Before presenting the first step we offer some intuition. In order
to make the number of occurrences of the universal variables
constant we replace their occurrences by new and distinct
existential variables. In detail, assume $x$ is a universal variable
that occurs $\ell$ times in an instance $\cnf$ of
$\gapthreesat[1-\eps,1]$. For such a variable we construct the graph
$G^{(\ell)}=(U \cup V \cup W,E)$ given in
\clmref{claim:oursuperconcentrators} and identify its $\ell$ output
vertices $V$ with the $\ell$ new existential variables. In addition,
we associate a universal variable with each of the $2 \ell$ vertices
of $U$, and an existential variable with each vertex in $W$ and also
with each edge in $E$. We add clauses that verify that in the
subgraph of $G^{(\ell)}$ given by the edges with value $\true$,
there are $\ell$ vertex-disjoint paths from $U$ to $V$ (and hence
each vertex in $V$ has one incoming path). We also add clauses that
verify that if an edge has value $\true$ then both its endpoints
must have the same value. This guarantees that each variable in $V$
gets the value of one of the variables in $U$. Completeness follows
because for any assignment to $U$, we can assign all the variables
in $V$ to the same value by connecting them to those variables in
$U$ that get the more popular assignment (recall that $|U|=2|V|$ and
the properties given in \clmref{claim:oursuperconcentrators}). For
the proof of soundness, we show that if all the $U$ variables are
assigned the same value, then all the $V$ variables should also be
assigned this value.

\subsubsection{The reduction}
The proof is by reduction from the problem $\gapthreesat[1-\eps,1]$
which is $\Pi_2$-hard for some constant $\eps>0$ as shown
in~\cite{KoLin94}. Let $\cnf(X,Y)$ be a $3\myminus\CNF$ Boolean
formula with $m$ clauses over the set of variables $X\cup Y$, where
$X=\{x_1,\ldots,x_{|X|}\}$ is the set of universal variables, and
$Y=\{y_1,\ldots,y_{|Y|}\}$ is the set of existential variables. The
reduction constructs a formula $\cnf'(X',Y')$ over $X'\cup Y'$. The
number of occurrences in $\cnf'$ of each universal variable from
$X'$ will be bounded by an absolute constant $\Bsat$, and the number
of literals in each clause will be at most $\Dsat$. In fact, these
constants are $\Bsat = 2$ and $\Dsat=d+1$, where $d$ is given in
\clmref{claim:oursuperconcentrators}.

For each universal variable $x_i\in X$ denote by $\ell_i$ the number
of its occurrences in the formula $\cnf$, and apply
\clmref{claim:oursuperconcentrators} to obtain the graph
$G_i=G^{(\ell_i)}=(U_i \cup V_i \cup W_i,E_i)$. Recall that the
maximum degree (indegree and outdegree) of these graphs is bounded
by some constant $d$ and that the number of edges in $G_i$ is
bounded by $c \cdot \ell_i$ for some constant $c$. Denote the vertex
sets of $G_i$ by
\[V_i=\{\Vvar{1}i,\ldots,\Vvar{\ell_i}i\}, \qquad
U_i=\{\Uvar{1}i,\ldots,\Uvar{2\ell_i}i\}, \qquad \text{and}\qquad
W_i=\{\Wvar{1}i,\ldots,\Wvar{|W_i|}i\}\enspace,
\]
and its edge set by $E_i=\{\Evar{1}i,\ldots,\Evar{|E_i|}i\}$. The set of
existential variables in $\cnf'$ is 
\[
Y'=\left(\bigcup_{i=1}^{|X|}{(V_i\cup W_i
    \cup E_i)}\right)\cup Y\enspace.
\]
The set of universal variables in $\cnf'$ is $X'=\bigcup_{i=1}^{|X|}{U_i}$.

\begin{figure}
  \begin{center}
    \includegraphics[scale=1.2]{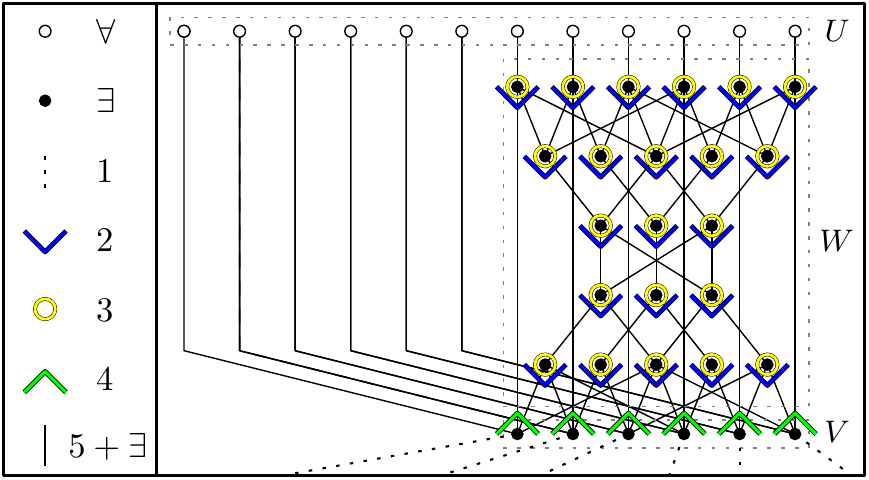}  %
  \end{center}
  \caption{An illustration of the reduction for the case $\ell=6$.}
  \label{fig:reduction}
\end{figure}

The clauses of $\cnf'$ are divided into the following five types
(see \figref{fig:reduction}).
\begin{enumerate}
  \item\textbf{Major clauses:} These clauses are obtained from
  clauses of the formula $\cnf$, by replacing the $j$th occurrence of
  the universal variable $x_i$ with the variable 
  \[
  \Vvar{j}i \in V_i
  \]
  for $1\leq i\leq |X|$, $1\leq j\leq\ell_i$. The number of clauses of this type is
  $m$.
  \item \textbf{Outdegree clauses:} These clauses
  verify that among the directed edges leaving a vertex in $G_i$, at most one has value $\true$.
  For each vertex $w$, we add the clause
  \[
  (\lnot\Evar{j_1}{i}\lor\lnot\Evar{j_2}{i})
  \]
  for each pair of edges $\Evar{j_1}{i},\Evar{j_2}{i}$ leaving $w$.
  Each such clause is duplicated $d^2$ times. The number of clauses of
  this type is at most $\ell_i\cdot c \cdot d^2 {{d}\choose{2}}$ for each $i$.
  \item \textbf{Flow clauses:} These clauses
  verify for any vertex $\Wvar{j}i \in W_i$ that if at least one of its outward edges
  has value $\true$ then there exists also an edge entering $\Wvar{j}i$ with value $\true$.
  This is done by adding a clause of the form
  \[
  (\lnot\Evar{j'}{i}\lor\Evar{j_1}{i}\lor\cdots\lor\Evar{j_{d'}}{i})
  \]
  for each $\Evar{j'}{i}$ leaving $\Wvar{j}i$ where
  $\Evar{j_1}{i},\ldots,\Evar{j_{d'}}{i}$ are all the $0 \leq d' \leq
  d$ edges entering $\Wvar{j}i$. The number of clauses of this type is
  at most $c \cdot\ell_i$ for each $i$.
  \item \textbf{$V$-degrees clauses:} These clauses verify
  that each vertex $\Vvar{j}{i}$ has at least one
  incident edge with $\true$ value. This is done by
  adding one clause of the form
  \[
  (\Evar{j_1}{i}\lor\cdots\lor\Evar{j_{d'}}{i})
  \]
  where $\Evar{j_1}{i},\ldots,\Evar{j_{d'}}{i}$ are the $d' \leq d$ edges
  incident to $\Vvar{j}i$. The number of clauses of this type is $\ell_i$
  for each $i$.
  \item \textbf{Edge consistency clauses:} For each edge
  $\Evar{j}i \in E_i$ do the following. Let $\Wvar{j_1}{i},\Wvar{j_2}{i} \in U_i \cup V_i \cup W_i$ be its endpoints.
  Add the two clauses 
  \[(\lnot\Evar{j}{i}\lor\Wvar{j_1}{i}\lor\lnot\Wvar{j_2}{i}) \qquad \text{and} \qquad
  (\lnot\Evar{j}{i}\lor\lnot\Wvar{j_1}{i}\lor\Wvar{j_2}{i})\enspace,
  \]
  which check that if the value of $\Evar{j}i$ is
  $\true$, then $\Wvar{j_1}{i}$ and
  $\Wvar{j_2}{i}$ have the same truth value. The number of
  clauses of this type is at most $2 c \ell_i$ for each $i$.
\end{enumerate}

Note that each clause contains at most $\Dsat=d+1$ literals. Using
$\sum_i{\ell_i}\leq 3m$, the number of clauses in $\cnf'$, which we
denote by $m'$, is at most $O(mc\cdot (d^4+1))\leq C\cdot m$ for
some absolute constant $C$. Moreover, the number of occurrences of
each universal variable is exactly $2$, because universal variables
appear only in clauses of type $(5)$ and vertices in the $U_i$ have
outdegree $1$. This completes the construction of $\cnf'$.

\subsubsection{Completeness}
Our goal in the completeness proof is to show that if $\cnf(X,Y)$ is
a $\YES$ instance of $\gapthreesat[1-\eps,1]$, then for any
assignment to $X'$, there is an assignment to $Y'$ that satisfies
all the $m'$ clauses in $\cnf'(X',Y')$. Let $t'$ be an arbitrary
assignment to the universal variables $X'$. Recall that $X'$ is the
union $\bigcup_{i=1}^{|X|}{U_i}$. We define an assignment $t$ to $X$
based on the majority of the assignments given by $t'$. More
formally,
\begin{eqnarray*}\label{eq:assignmentdef}
t(x_i)=\left\{
         \begin{array}{ll}
           \true, & \text{if $|\{j:t'(\Uvar{j}i)=\true\}|\geq\ell_i$}\enspace, \\
           \false, & \text{otherwise}.
         \end{array}
       \right.
\end{eqnarray*}
By the assumption on the original formula $\cnf(X,Y)$, the
assignment $t$ can be extended to $X\cup Y$, in a way that satisfies
all the clauses in $\cnf(X,Y)$. Let us extend the assignment $t'$ to
the existential variables 
\[
 Y'=\Bigl(\bigcup_{i=1}^{|X|}{(V_i\cup W_i\cup E_i)}\Bigr)\cup Y\enspace.
\] First, let the assignment $t'$ give the same values
as $t$ for the variables in $Y$. For each $i$ denote by $S_i
\subseteq U_i$ a set of vertices from $U_i$ of size $|S_i|=\ell_i$
in which every variable has value $t(x_i)$. There exists such a set
according to the definition of $t$. By
\clmref{claim:oursuperconcentrators} there are $\ell_i$
vertex-disjoint directed paths in $G_i$ from $S_i$ to $V_i$. We
define $t'(\Evar{j}i)$ to be $\true$ if $\Evar{j}i$ appears in one
of these paths and $\false$ otherwise. In addition, $t'$ gives the
value $t(x_i)$ to all variables in $V_i \cup W_i$.

We now check that the assignment $t'$ satisfies all clauses in
$\cnf'$. The assignment to the variables in $V_i$ is $t(x_i)$. Since
the variables $Y$ are also assigned according to $t$, all clauses of
type $(1)$ are satisfied. The paths given by
\clmref{claim:oursuperconcentrators} are vertex-disjoint. In
particular, every vertex has at most one outward edge assigned to
$\true$, so all clauses of type $(2)$ are satisfied too. Moreover,
if at least one of the edges leaving a vertex $w \in W_i$ has value
$\true$ then there exists also a directed edge with value $\true$
entering $w$. Therefore, the clauses of type $(3)$ are satisfied.
The number of paths in $G_i$ is $\ell_i$, so there is one path
reaching every vertex in $V_i$. This means that the clauses of type
$(4)$ are satisfied too. Finally, our assignment gives the value
$t(x_i)$ to all variables in $S_i \cup V_i \cup W_i$. In particular,
each edge assigned to $\true$ has both its endpoints with the same
value. Thus, the clauses of type $(5)$ are satisfied, as required.

\subsubsection{Soundness}
In the soundness proof we assume $\cnf(X,Y)$ is a $\NO$ instance of
$\gapthreesat[1-\eps,1]$. We will show the existence of an
assignment to $X'$ for which any assignment to $Y'$ satisfies at
most $(1-\eps')m'$ clauses of $\cnf'(X',Y')$ for
$\eps'={\eps}/{C}$, and hence the theorem will follow.

Let $t$ be an assignment to $X$ such that every extension of $t$ to
$X\cup Y$ satisfies at most $(1-\eps)m$ clauses in $\cnf(X,Y)$.
Define an assignment $t'$ to $X'$ in which every variable
$\Uvar{j}i$ has the value $t(x_i)$. Extend $t'$ to an assignment to
$X'\cup Y'$ in an arbitrary way. Our goal in the following is to
show that the number of clauses satisfied by $t'$ is at most
$(1-\eps')m'$. We start with the following two claims.

\begin{claim}\label{claim:satisfying2}
Let $t'$ be an assignment to $X' \cup Y'$ as above. Then $t'$ can be
modified to an assignment $t''$ that satisfies every clause of type
$(2)$ and satisfies at least as many clauses as $t'$ satisfies.
\end{claim}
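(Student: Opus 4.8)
The plan is to obtain $t''$ from $t'$ by switching off redundant outgoing edges. Precisely: for every vertex $w$ of every graph $G_i$ that has at least two outgoing edges set to $\true$ by $t'$, keep one such edge $\true$ (say the one of smallest index) and reset every other outgoing edge of $w$ to $\false$; every remaining variable keeps its value under $t'$, so in particular $t''$ agrees with $t'$ on $X'$. After this change each vertex has at most one outgoing edge valued $\true$, so \emph{every} outdegree clause (type $(2)$) is satisfied by $t''$. It remains to argue that no clauses are lost on balance.

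Since the only variables whose values we changed are edge variables, and we only changed them from $\true$ to $\false$, a clause can become unsatisfied only through a \emph{positive} literal $\Evar{j}{i}$ of a reset edge. Going over the five types: type $(1)$ clauses contain no edge variables; type $(5)$ clauses, together with the literals $\lnot\Evar{j}{i}$ that appear in types $(2)$ and $(3)$, contain only negative occurrences of edges and hence can only move from unsatisfied to satisfied. So the sole possible losses are flow clauses (type $(3)$) becoming false through the positive literal of an incoming edge, and $V$-degrees clauses (type $(4)$) becoming false through the positive literal of an incident edge.

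Now I would bound the loss against the gain. Let $e$ be a reset edge, with tail $a$ and head $b$; at $a$ the edge $e$ appears only negatively in the relevant clauses, so no clause at $a$ is harmed. At $b$: if $b\in W_i$, the only flow clause that can turn false is the one attached to the (unique) retained $\true$ outgoing edge of $b$, because for every other outgoing edge of $b$ the literal $\lnot\Evar{j}{i}$ is now satisfied; if instead $b\in V_i$, at most $b$'s single $V$-degrees clause can turn false. Consequently each vertex that is the head of a reset edge accounts for at most one newly unsatisfied clause, so the total loss is at most the number $R$ of reset edges. For the gain, a vertex with $p\geq 2$ outgoing $\true$ edges under $t'$ yielded $p-1$ reset edges, and the ${p\choose 2}\geq p-1$ pairs among those $p$ edges each produced a type-$(2)$ clause duplicated $d^2$ times, all of these violated by $t'$ and satisfied by $t''$. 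Summing over such vertices, $t''$ satisfies at least $Rd^2$ more type-$(2)$ clauses than $t'$. Since $d\geq 1$, the net change $Rd^2-R$ is nonnegative, so $t''$ satisfies at least as many clauses of $\cnf'$ as $t'$ does, which proves the claim.

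The delicate point, and the one I expect to need the most care, is this loss bound: establishing that resetting an edge can spoil at most one \emph{other} clause — a type-$(3)$ or type-$(4)$ clause sitting at its head — which relies on the observation that for every non-retained outgoing edge of a vertex the corresponding negative literal in the flow clauses is already satisfied. This is precisely why the outdegree clauses were given $d^2$ copies in the construction: it lets the type-$(2)$ gain dominate the at-most-$R$ possible losses.
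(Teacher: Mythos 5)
Your proof is correct and follows essentially the same approach as the paper's: switch surplus outgoing edges from $\true$ to $\false$ (the paper resets \emph{all} outgoing edges of an offending vertex rather than retaining one), observe that outside types $(3)$ and $(4)$ the edge variables occur only negatively, and let the $d^2$-fold duplication of the outdegree clauses absorb the losses. The only difference is bookkeeping: you bound the loss by one clause per head of a reset edge, whereas the paper bounds it by $d^2$ per modified vertex; both bounds are dominated by the corresponding gain in satisfied type-$(2)$ clauses.
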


\begin{proof}
We obtain $t''$ by performing the following modification to $t'$ for
each $i$: For each variable in $W_i$, if it has more than one
outward edge assigned to $\true$ by $t'$, $t''$ assigns $\false$ to
all its outward edges. Since we only modify variables in $E_i$,
clauses of type $(1)$ are not affected. Moreover, since we only set
edges to $\false$, we do not decrease the number of satisfied
clauses of type $(5)$. We might, however, reduce the number of
satisfied clauses of types $(3)$ and $(4)$ by at most $d^2$ for each
variable (at most $d$ for each out-neighbor of the vertex). On the
other hand, the corresponding clause of type $(2)$ is satisfied by
$t''$, and by the duplication, this amounts to at least $d^2$
additional satisfied clauses. In total, the number of clauses
satisfied by $t''$ is at least the number of clauses satisfied by
$t'$, and the claim follows.
\end{proof}

\begin{claim}\label{claim:Kunsatisfied}
Let $t'$ be an assignment to $X' \cup Y'$ that satisfies all clauses
of type $(2)$. Denote by $k$ the number of vertices $\Vvar{j}i \in
\bigcup_{l}{V_l}$ satisfying $t'(\Vvar{j}i) \neq t(x_i)$, where $t$
is the assignment to $X$ as above. Then at least $k$ clauses of
types $(3)$, $(4)$ or $(5)$ are unsatisfied by $t'$.
\end{claim}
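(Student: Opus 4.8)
The plan is to argue one vertex at a time: for each ``bad'' vertex $\Vvar{j}i$ with $t'(\Vvar{j}i)\neq t(x_i)$, I will trace a directed path backwards through $G_i$ starting from $\Vvar{j}i$ and show that this path must encounter a violated clause of type $(3)$, $(4)$ or $(5)$. Then I will argue that the violated clauses produced by distinct bad vertices are distinct, so that the total count is at least $k$. The backward-tracing argument goes as follows. Fix a bad vertex $\Vvar{j}i$. If no edge incident to it is assigned $\true$, then the type-$(4)$ clause for $\Vvar{j}i$ is violated and we are done with this vertex. Otherwise, since all type-$(2)$ clauses are satisfied and inputs in $U_i$ have indegree $0$, some incoming edge $e$ of $\Vvar{j}i$ is $\true$; by the type-$(5)$ clauses (which force equality along $\true$ edges, else they are violated), either that clause is violated — done — or the tail of $e$ also has value $t'(\Vvar{j}i)\neq t(x_i)$. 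The tail is a vertex of $W_i$ (it cannot be in $U_i$, since all $U_i$ variables are set to $t(x_i)$ by $t'$, so a $\true$ edge out of $U_i$ would violate a type-$(5)$ clause). Now repeat at this $W_i$ vertex $w$: it has a $\true$ outgoing edge (the one we just arrived along), so either a type-$(3)$ clause at $w$ is violated — done — or $w$ has a $\true$ incoming edge, whose tail again has value $t'(\Vvar{j}i)$ by type-$(5)$, and again lies in $W_i$ (not $U_i$). Since $G_i$ is a directed acyclic graph, this backward walk cannot continue forever, so it must terminate by hitting a violated clause of type $(3)$, $(4)$ or $(5)$.

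Next I need the distinctness across the $k$ bad vertices. The natural way is to show that the backward walks are vertex-disjoint, so the violated clauses they exhibit are attached to different vertices (and in the type-$(5)$ case, to different edges). The key observation making the walks disjoint is that each vertex has at most one $\true$ outgoing edge (type-$(2)$ satisfied) and the walk, after leaving $\Vvar{j}i$, always moves from a vertex to the tail of one of its $\true$ incoming edges; but if two backward walks from distinct outputs $\Vvar{j}i$, $\Vvar{j'}{i'}$ ever met at a common vertex $z$, then from $z$ onward they would have to proceed identically (each step is forced: from $z$ follow \emph{the} $\true$ out-edge... wait, we are walking backwards). Let me reconsider: the walks are determined going \emph{forward}, since each vertex has at most one $\true$ out-edge; so if walk-from-$\Vvar{j}i$ and walk-from-$\Vvar{j'}{i'}$ share a vertex $z$, then following the unique $\true$ out-edges from $z$ they reach the same output, forcing $\Vvar{j}i=\Vvar{j'}{i'}$, a contradiction. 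Hence the walks are vertex-disjoint, the terminal violated clauses are pairwise distinct, and we obtain at least $k$ unsatisfied clauses of types $(3)$, $(4)$, $(5)$.

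I expect the main obstacle to be the bookkeeping in the disjointness argument: one has to be careful that a walk which terminates at a type-$(5)$ violation (an edge) versus one terminating at a type-$(3)$ or type-$(4)$ violation (a vertex) are still counted without collision, and that a single walk is not double-counted. The cleanest route is probably to assign to each bad $\Vvar{j}i$ a witness that is \emph{either} the last vertex of its backward walk together with its violated type-$(3)$/$(4)$ clause, \emph{or} the $\true$ edge whose type-$(5)$ clause it violated; then use the ``unique $\true$ out-edge'' property to show the map from bad vertices to witnesses is injective. Everything else — the case analysis at each step, and the fact that tails of $\true$ edges never lie in $U_i$ — is routine given that $t'$ sets every $U_i$ variable to $t(x_i)$ and that all type-$(2)$ clauses hold.
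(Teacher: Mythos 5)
Your proposal is correct and takes essentially the same approach as the paper: the paper phrases your backward walk statically, defining $A_j$ as the set of vertices with a $\true$-edge path to $\Vvar{j}{i}$, using acyclicity to extract an indegree-$0$ source vertex and the satisfied type-$(2)$ clauses to make the sets $A_j$ pairwise disjoint (which is exactly your forward-uniqueness argument for disjointness of the walks). The concluding case analysis --- source in $U_i$ gives a violated type-$(5)$ clause since $t'(u)=t(x_i)$, otherwise a violated type-$(3)$ or $(4)$ clause --- is identical.
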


\begin{proof}
Fix some $i$. It suffices to show that to each vertex $\Vvar{j}i$
satisfying $t'(\Vvar{j}i) \neq t(x_i)$ we can assign in a one-to-one
fashion a clause of type $(3)$, $(4)$ or $(5)$ which is not
satisfied by $t'$. To show this let $G'$ be the subgraph of $G_i$
given by the edges assigned to $\true$ by $t'$. Let $A_j$ be the set
of vertices that have a directed path in $G'$ to $\Vvar{j}{i}$.
Since clauses of type $(2)$ are all satisfied by $t'$, the sets
$A_j$ are pairwise disjoint. Fix some $1 \leq j \leq \ell_i$ such
that $t'(\Vvar{j}{i}) \neq t(x_i)$. Since $G_i$ is acyclic, $A_j$
contains a vertex $u$ whose indegree in $G'$ is $0$. If $u$ is in
$U_i$ then at least one of the clauses of type $(5)$ on the path
from $u$ to $\Vvar{j}{i}$ is unsatisfied by $t'$, because $t'(u) =
t(x_i)$ whereas $t'(\Vvar{j}i) \neq t(x_i)$. Otherwise at least one
of the clauses of types $(3)$ and $(4)$ is unsatisfied by $t'$.
Therefore, we see that the number of clauses of type $(3)$-$(5)$
unsatisfied by $t'$ is at least the number of vertices $\Vvar{j}i$
satisfying $t'(\Vvar{j}i) \neq t(x_i)$.
\end{proof}

Recall that $t'$ is an assignment to $X' \cup Y'$ that assigns every
variable $\Uvar{j}i$ to $t(x_i)$. We have to show that $t'$
satisfies at most $(1-\eps')m'$ clauses in $\cnf'$. By
\clmref{claim:satisfying2} we can assume that $t'$ satisfies all
clauses of type $(2)$ in $\cnf'$.

Now, we define an assignment $t''$ to $X' \cup Y'$ as follows. For
each $i$, let $S_i$ be an arbitrary subset of $U_i$ of size
$\ell_i$. We know that there exist $\ell_i$ directed vertex-disjoint
paths from $S_i$ to $V_i$ in $G_i$. The assignment $t''$ assigns all
the $\Evar{j}i$ in these paths to $\true$ and all other $\Evar{j}i$
to $\false$. Moreover, $t''$ gives all variables in $U_i \cup V_i
\cup W_i$ the value $t(x_i)$. Finally, we define $t''$ on $Y$ to be
identical to $t'$. Notice that in $t''$ all clauses of type
$(2)$-$(5)$ are satisfied. Denote by $k$ the number of the variables
$\Vvar{j}{i}$ satisfying $t'(\Vvar{j}{i}) \neq t(x_i)$. Then the
number of type $(1)$ clauses satisfied by $t''$ is smaller than that
of $t'$ by at most $k$. Moreover, $t'$ satisfies all clauses of type
$(2)$, so by \clmref{claim:Kunsatisfied} at least $k$ clauses of
type $(3)$-$(5)$ are unsatisfied by $t'$. In total, the number of
clauses satisfied by $t''$ is at least the number of clauses
satisfied by $t'$.

Finally, by our assumption on $\cnf$ and on $t$ we get that at least
$\eps m$ clauses of type $(1)$ are not satisfied by $t''$. So the
number of satisfied clauses is at most $m'-\eps m \leq (1-\eps')m'$,
as required.

\subsection{Step 2}

With Step $1$ proven, we now apply an idea of~\cite{PapaYanna91} to
show that there are absolute constants $\Bsat$ and $\eps>0$, for
which the problem $\gapDsatB[1-\eps,1]$ is $\Pi_2$-hard. This proof
uses the expander graphs from \lemref{lem:expandersStep2}.

\paragraph{The reduction:} Consider the $\Pi_2$-hard problem
$\gapDsatBX[1-\eps',1]$ for some $\eps'>0$. Let $\cnf(X,Y)$ be an
instance of this problem. For every existential variable $y_i\in Y$
($1\leq i\leq |Y|$) denote by $n_i$ the number of the occurrences of
$y_i$ in $\cnf$. Assuming $n_i$ is large enough, consider the graph
$G_i=(V_i,E_i)$ given by \lemref{lem:expandersStep2} for $n_i$, with
$n_i\leq|V_i|\leq C_1 n_i$ (if $n_i$ is not large enough, we do not
need to modify this variable). Label the vertices of $G_i$ with
$|V_i|$ new distinct existential variables
$Y_i=\{\Yvar{1}i,\ldots,\Yvar{|V_i|}i\}$. We construct a new Boolean
formula $\cnf'(X,Y')$ over the universal variables in $X$ and the
existential variables in $Y'=\bigcup_{i=1}^{|Y|}{Y_i}$. First, for
each $1\leq i\leq |Y|$ replace the occurrences of $y_i$ by $n_i$
distinct variables of $Y_i$. Second, for each edge
$(\Yvar{j}i,\Yvar{j'}i)$ in $G_i$, add to $\cnf$ the two clauses
\[
(\lnot\Yvar{j}i\lor\Yvar{j'}i) \qquad\text{and}\qquad (\Yvar{j}i\lor\lnot\Yvar{j'}i)\enspace,
\]
which are both satisfied if and only if the variables
$\Yvar{j}i,\Yvar{j'}i$ have the same value. The number of clauses in
$\cnf'$ is linear in $\sum_{i}{n_i}\leq \Dsat m$. Notice, that the
number of occurrences of \emph{each} variable in $\cnf'$ is bounded
by a constant.

\paragraph{Correctness:} Let $\cnf(X,Y)$, an $m$ clause formula, be a
$\YES$ instance, \ie, for every assignment to $X$ there exists an
assignment to $Y$ such that every clause in $\cnf$ is satisfied.
Clearly, for any assignment to $X$ there exists an assignment to
$Y'$ which satisfies all the clauses in $\cnf'$, because we can set
the $Y_i$ variables the value of $y_i$ in $\cnf$. Now , assume
$\cnf$ is a $\NO$ instance, so there is an assignment $t$ to $X$
such that for any assignment to $Y$ at least $\eps' m$ clauses are
unsatisfied in $\cnf$. Let $t'$ be an arbitrary extension of $t$ to
$X\cup Y'$. If for some $1 \leq i \leq |Y|$, $t'$ does not assign to
all the $Y_i$ variables the same value for some $1\leq i\leq |Y|$,
it is possible to improve the number of satisfied clauses by setting
all the $Y_i$ variables to the majority vote of $t'$ on $Y_i$.
Indeed, denote by $S_i$ the set of variables in $Y_i$ that were
assigned by $t'$ to $\true$. This modification reduces the number of
satisfied clauses by at most $\min(|S_i|,|\overline{S_i}|)$, but
satisfies at least $|E(S_i,\overline{S_i})|$ unsatisfied consistency
clauses. \lemref{lem:expandersStep2} states that
$|E(S_i,\overline{S_i})|>\min(|S_i|,|\overline{S_i}|)$, so this
modification improves the number of satisfied clauses. Hence, we can
assume that for each $1 \leq i \leq |Y|$, $t'$ assigns to all the
$Y_i$ variables the same value for each $1\leq i \leq |Y|$. Thus, by
the assumption on $\cnf$ we conclude that $t'$ does not satisfy at
least $\eps' m$ clauses, meaning at least an ${\eps'}/{\Dsat}$
fraction of the clauses is not satisfied. Defining
$\eps= {\eps'}/{\Dsat}$ completes the proof.

\subsection{Step 3}

This subsection completes the proof of \thmref{thm:gapEthreesatB} by
showing a reduction that modifies the size of the clauses to exactly
$3$.

\paragraph{The reduction:} Let $\cnf(X,Y)$ be an instance of
$\gapDsatB[1-\eps',1]$ with $m$ clauses. We transform $\cnf$ into a
formula $\cnf(X',Y')$, whose clauses are of size exactly $3$, as
follows. For each clause of size $1$, like $(a)$, we add a new
universal variable $z$ and replace it by $(a\lor z\lor z)$.
Similarly, for each clause of size $2$, like $(a\lor b)$, we add a
new universal variable $z$ and replace it by $(a\lor b\lor z)$. Now
consider a clause $C=(u_1\lor u_2\lor\cdots\lor u_r)$ of size $r>3$,
where the $u_i$ are literals. For each such clause introduce $r-3$
new and distinct existential variables $z_1,\ldots,z_{r-3}$ and
replace $C$ in the formula $\cnf$ by the clauses of $C'$,
\[
C'=(u_1\lor u_2\lor z_1)\land(\lnot z_1\lor u_3\lor z_2)\land\cdots\land(\lnot z_{r-4}\lor u_{r-2}\lor
z_{r-3})\land(\lnot z_{r-3}\lor u_{r-1}\lor u_r)\enspace.
\]
The number of the clauses in $\cnf'$ is at most $\Dsat m$.
Obviously, the number of occurrences of each variable remains the
same, and the newly added variables appear either once or twice.

\paragraph{Correctness:} It is easy to see that if $\cnf$ is a $\YES$
instance then so is $\cnf'$ and that if $\cnf$ is a $\NO$ instance,
then there exists an assignment to $X'$ such for any assignment
$Y'$, at least $\eps' m$ of the clauses of $\cnf'(X',Y')$ are
unsatisfied. So for $\eps={\eps'}/{\Dsat}$ we get the desired
result.

\section{On the number of occurrences}\label{sec:numberOfOcc}

The output of the reduction of \secref{sec:sat} is a formula in
which every universal variable occurs at most twice and every
existential variable occurs at most $\Bsat$ times for some constant
$\Bsat$. By performing a transformation similar to the one in Step
$2$ with the graphs of \lemref{lem:expandersStep2} replaced by
directed cycles, the number of occurrences of each existential
variable can be made at most $3$ (see for example Theorem 10.2, Part
1 in~\cite{AroraLund96}). This implies that if we allow each
universal variable to occur at most twice and each existential
variable to occur at most $3$ times, the problem remains
$\Pi_2$-hard. Here, we show that $2$ and $3$ are the best possible
constants (unless the polynomial-time hierarchy collapses).

First note that whenever a universal variable occurs only once in a
formula, we can remove it without affecting the formula. Hence, if
each universal variable occurs at most once, the problem is in $\NP$
and thus is not $\Pi_2$-hard, unless the polynomial-time hierarchy
collapses.

Moreover, if we allow every existential variable to occur at most
twice, the problem lies in $\coNP$ and is thus unlikely to be
$\Pi_2$-hard. Given an assignment to the universal variables $X$,
the formula $\cnf(X,Y)$ becomes a $\sat$ formula in which each
variable appears at most twice. Checking satisfiability of such
formulas can be done in polynomial time~\cite{Tovey84}. Indeed,
variables that appear only once and those that appear twice with the
same sign can be removed from the formula together with the clauses
that contain them. This means that we are left with a $\sat$ formula
in which each variable appears once as a positive literal and once
as a negative one. So consider the bipartite graph $H=(A \cup B,E)$
in which $A$ is the set of clauses of $\cnf$ and $B$ is the set of
its existential variables. We connect by an edge a clause in $A$ to
a variable in $B$ if the clause contains the variable. Notice that
there exists a matching in $H$ that saturates $A$ if and only if the
formula is satisfiable. The existence of such a matching can be
checked easily in polynomial time. Therefore $\gapsat$ restricted to
instances in which every existential variable occurs at most twice
is in $\coNP$.

\section{Extension to higher levels of the hierarchy}\label{sec:HigherLevelsOfTheHierarchy}

As one might expect, \thmref{thm:gapEthreesatB} can be generalized
to any level of the polynomial-time hierarchy. In this section, we
describe in some detail how this can be done. Our aim is to prove
the following theorem (the problems below are the natural extension
of $\gapthreesat$ to higher levels of the hierarchy;
see~\cite{KoLin94}).

\begin{theorem}\label{thm:generalization}
For any $r \geq 1$ there exists an $\eps>0$ such that
$\gapKthreesatB[1-\eps,1]$ is $\Pi_{2r}$-complete and
$\EgapKthreesatB[1-\eps,1]$ is $\Sigma_{2r+1}$-complete (where
$\Bsat$ is some absolute constant). Moreover, this is true even when
the number of literals in each clause is exactly $3$.
\end{theorem}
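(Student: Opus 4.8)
The plan is to mimic the structure of the proof of Theorem~\ref{thm:gapEthreesatB}, which is precisely the case $r=1$ of the present theorem (a $\Pi_2$-complete problem is the same as a $\Pi_{2\cdot 1}$-complete problem, and $\Pi_2$-hardness plus membership in $\Pi_2$ gives completeness). The starting point is the analogue of the Ko--Lin result one level up: for every $r$, the perfect-completeness gap problem $\gapKthreesat[1-\eps,1]$ is $\Pi_{2r}$-hard for some $\eps>0$ (and symmetrically $\EgapKthreesat[1-\eps,1]$ is $\Sigma_{2r+1}$-hard), as follows from the quantified PCP-type constructions in~\cite{KoLin94}. Given such an instance, the variables are partitioned into $2r$ (resp.\ $2r+1$) blocks $X_1,\ldots,X_{2r}$ quantified $\forall\exists\forall\exists\cdots$, and membership in $\Pi_{2r}$ (resp.\ $\Sigma_{2r+1}$) of the bounded-occurrence gap problem is immediate since the reduction below is polynomial-time and the quantifier structure is preserved. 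So the whole content is the hardness direction, i.e.\ bounding occurrences while keeping perfect completeness.

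The three-step reduction carries over essentially verbatim. In Step~1 we must bound the number of occurrences of variables in \emph{every universally quantified block} $X_1,X_3,\ldots,X_{2r-1}$ (for the $\Pi_{2r}$ case; for the $\Sigma_{2r+1}$ case the universal blocks are $X_2,X_4,\ldots$). For a variable $x$ in such a block occurring $\ell$ times, we build the superconcentrator gadget $G^{(\ell)}$ of Claim~\ref{claim:oursuperconcentrators} exactly as before: the $2\ell$ inputs $U$ become new variables in the \emph{same universal block} as $x$, the $\ell$ outputs $V$ become new existential variables placed in the \emph{innermost existential block} (so that they may depend on everything), and the internal vertices $W$ and the edge-variables $E$ also go into the innermost existential block. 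The five clause types (major, outdegree, flow, $V$-degrees, edge-consistency) are defined identically. Crucially, the completeness and soundness arguments are \emph{local to each gadget} and use only the majority-decoding property of $G^{(\ell)}$; they make no reference to the quantifier depth, so Claims~\ref{claim:satisfying2} and~\ref{claim:Kunsatisfied} and the surrounding analysis go through unchanged, now applied block by block. One must process the universal blocks one at a time (or all at once, since the gadgets for distinct blocks share no variables); the only bookkeeping point is that the newly introduced existential variables from a gadget for a variable in block $X_{2j-1}$ must be allowed to depend on \emph{all} universal blocks, which is why they are placed innermost. Steps~2 and~3 (expander-based reduction of existential-variable occurrences via Lemma~\ref{lem:expandersStep2}, and the standard clause-splitting to make every clause have exactly $3$ literals) are quantifier-oblivious: they only ever duplicate existential variables or introduce fresh ones, and the placement rule (new existential variables go into the innermost existential block, new universal "padding'' variables from the size-$1$ and size-$2$ clauses go into any universal block) preserves the quantifier prefix. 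This yields $\gapKthreesatB[1-\eps,1]$ with $\eps>0$, clauses of size exactly $3$, and a bounded number of occurrences, establishing $\Pi_{2r}$-hardness; the $\Sigma_{2r+1}$ statement is the same reduction applied to the $\Sigma_{2r+1}$-hard problem, with the roles of "universal'' and "existential'' blocks assigned according to that prefix.

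The main obstacle to watch is the placement of the gadget variables relative to the quantifier prefix in the completeness proof. At level $r=1$ there is only one existential block, so "put the new existential variables innermost'' is automatic; at higher levels one must verify that a $\YES$ instance stays a $\YES$ instance, i.e.\ that for every assignment to all universal blocks there is a \emph{consistent} choice (respecting the alternation) of the gadget variables satisfying all clauses. This works because, as in the $r=1$ completeness argument, once the majority assignment $t(x_i)$ to each original universal variable is fixed by the assignment to the padded universal block, the gadget variables ($V_i$, $W_i$, $E_i$) are chosen as a deterministic function of that majority — they do not need to "see'' later universal blocks beyond what determines $t$ — so placing them in the innermost existential block is more than sufficient. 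The soundness direction is even easier: fixing every padded universal variable in block $X_{2j-1}$ to the value $t$ gives its parent (this is a legal move for the adversary choosing the universal assignment), after which Claims~\ref{claim:satisfying2} and~\ref{claim:Kunsatisfied} show that any existential response loses an $\eps$ fraction of the major clauses, exactly as before. Beyond this, everything is routine, so I would state the level-$r$ hardness of $\gapKthreesat[1-\eps,1]$ as the input, remark that the $r=1$ proof is quantifier-local, and indicate the placement rule for new variables as the only modification.
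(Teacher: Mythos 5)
Your overall plan coincides with the paper's: start from the Ko--Lin result that $\gapKthreesat[1-\eps,1]$ is $\Pi_{2r}$-complete, and rerun the three steps of \secref{sec:sat}, arguing that the gadgets are ``quantifier-local.'' The paper formalizes this locality by extracting two lemmas (\lemref{lem:hierarchy_univ} and \lemref{lem:hierarchy_exist}) stating an \emph{exact} min--max identity for any Lipschitz clause-counting function $f$, and then substitutes variable by variable so that the expression in Equation~\eqref{eq:goal} is preserved. The one design decision the paper is explicitly careful about --- and flags in a footnote --- is \emph{where in the quantifier prefix the new variables go}: the $2\ell$ duplicates of $x\in X_i$ go into $X_i$, and the new existential gadget variables go into $Y_i$, the block immediately following $X_i$ (similarly, in Step~2 the expander copies of $y\in Y_i$ stay in $Y_i$). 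Your choice to push the Step~1 gadget variables to the innermost existential block is different but defensible: since the outdegree/flow/$V$-degree/edge-consistency clauses force each output $\Vvar{j}{i}$ to equal the majority of the $U_i$ duplicates on pain of losing one clause per deviation (\clmref{claim:Kunsatisfied}), while each $\Vvar{j}{i}$ can repair at most one major clause, the innermost minimization over the gadget variables collapses pointwise to honest play, so delaying their choice gains the existential player nothing. You gesture at this but should state it as the actual argument.

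The genuine problem is your blanket placement rule as applied to Step~2. There, an existential variable $y\in Y_j$ occurring $n$ times is replaced by $n$ \emph{new} existential variables (the expander vertices), and these copies carry $y$'s occurrences in the original clauses. If, following your rule, these copies are placed in the innermost block $Y_r$ rather than in $Y_j$, the existential player gets to choose the value of $y$ \emph{after} seeing the universal blocks $X_{j+1},\ldots,X_r$. Unlike in Step~1, nothing forces the copies to a value determined at round $j$: the consistency clauses only force the copies to agree \emph{with each other}, and they are free to agree on either truth value. This strictly increases the existential player's power and can turn a $\NO$ instance into a $\YES$ one (consider a formula where $y\in Y_1$ must commit to a value before $X_2$ is revealed and the clauses then test whether $y$ matches $X_2$). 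So soundness of Step~2 fails under your rule; the copies must remain in $Y_j$, exactly as in \lemref{lem:hierarchy_exist}, where the new $\min_{y_1,\ldots,y_\ell}$ is inserted at the position of the original $\min_y$. The same care is needed in Step~3 for the padding variables of short clauses (the paper adds them to $X_r$ and $Y_r$, which is safe because they occur in no other clause). With the placement corrected for Step~2, your argument goes through and is essentially the paper's.
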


For convenience, we present the proof only for the even levels of
the hierarchy ($\Pi_{2r}$). The case of odd levels is almost
identical.

Our starting point is a result of~\cite{KoLin94}, which says that
for any $r \geq 1$ there exists an $\eps>0$ such that
$\gapKthreesat[1-\eps,1]$ is $\Pi_{2r}$-complete. As in
\secref{sec:sat}, the proof proceeds in three steps. In the first we
reduce the number of occurrences of universal variables. In the
second we reduce the number of occurrences of existential variables.
Finally, in the third step we modify the formula such that the size
of each clause is exactly $3$.

\subsection{Step 1}

In this step we show that for any $\eps>0$ there exists an $\eps'>0$
such that $\gapKthreesat[1-\eps,1]$ reduces to
$\gapKDsatB_{\forall}[1-\eps',1]$ for some absolute constants
$\Dsat, \Bsat$ (where the latter problem is a restriction of the
former to instances in which each universal variable appears at most
$\Bsat$ times). In more detail, given a $3$-$\CNF$ formula $\cnf$ on
variable set $X_1 \cup Y_1 \cup \cdots \cup X_r \cup Y_r$, we show
how to construct a $\Dsat$-$\CNF$ formula $\cnf'$ on variable set
$X'_1 \cup Y'_1 \cup \cdots \cup X'_r \cup Y'_r$ in which each
universal variable appears at most $\Bsat$ times, and whose size is
linear in the size of $\cnf$, such that
\begin{align}
&\max_{t_{X_1}} \min_{t_{Y_1}} \cdots \max_{t_{X_r}} \min_{t_{Y_r}}
    {\unsat(\cnf,t_{X_1},t_{Y_1},\ldots,t_{X_r},t_{Y_r})} \nonumber \\ \label{eq:goal}
  & \qquad = \max_{t_{X'_1}} \min_{t_{Y'_1}} \cdots
\max_{t_{X'_r}} \min_{t_{Y'_r}}
{\unsat(\cnf',t_{X_1'},t_{Y_1'},\ldots,t_{X_r'},t_{Y_r'})}\enspace,
\end{align}
where $\unsat$ denotes the number of {\em unsatisfied} clauses in a
formula for a given assignment. It is easy to see that this is
sufficient to establish the correctness of the reduction. Moreover,
it can be verified that in Step 1, \secref{sec:sat} we proved
Equation~\eqref{eq:goal} for the case $r=1$.

Before describing the reduction, we note that in Step 1,
\secref{sec:sat}, the only property of the original formula that we
used is that flipping the value of an occurrence of a variable can
change the number of satisfied clauses by at most one. This leads us
to the following lemma, whose proof was essentially given already in
Step 1, \secref{sec:sat}.

\begin{lemma}\label{lem:hierarchy_univ}
For any $\ell \geq 1$ there exists a $k \geq \ell$ and a
$\Dsat$-$\sat$ formula $\Phi(x_1,\ldots,x_{2\ell},y_1,\ldots,y_k)$
(for some absolute constant $\Dsat$) on $2\ell+k$ variables of size
$O(\ell)$ in which each of the first $2\ell$ variables appears at
most twice such that the following holds. For any integer-valued
function $f$ on $\ell$ Boolean variables with the property that
flipping any one variable changes the value of $f$ by at most one,
we have that
$$ \max_x f(x,\ldots,x) = \max_{x_1,\ldots,x_{2\ell}} \min_{y_1,\ldots,y_k} (f(y_1,\ldots,y_\ell) + \unsat(\Phi,x_1,\ldots,x_{2\ell},y_1,\ldots,y_k))\enspace,$$
where $x,x_1,\ldots,x_{2\ell},y_1,\ldots,y_k$ are Boolean variables.
\end{lemma}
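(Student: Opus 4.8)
The plan is to reuse the construction of Step 1 in Section~\ref{sec:sat} verbatim, but phrased so that the role of ``the number of satisfied clauses in $\cnf$'' is played by an abstract function $f$. Given $\ell$, I would set $k=\sum$ of the output sizes --- here simply $k=\ell$, and the extra variables $y_{\ell+1},\ldots,y_k$ are the $W$-vertex and edge variables of the superconcentrator gadget, so $k=\ell+|W|+|E|$ where $|W|,|E|=O(\ell)$. Concretely, apply \clmref{claim:oursuperconcentrators} with $n=\ell$ to obtain $G^{(\ell)}=(U\cup V\cup W,E)$; identify the $2\ell$ inputs $U$ with the variables $x_1,\ldots,x_{2\ell}$, the $\ell$ outputs $V$ with the variables $y_1,\ldots,y_\ell$, and introduce one fresh variable for each vertex of $W$ and each edge of $E$ (these are $y_{\ell+1},\ldots,y_k$). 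Let $\Phi$ be the conjunction of the clauses of types $(2)$--$(5)$ from Step 1 (outdegree, flow, $V$-degrees, edge-consistency clauses); there are no type-$(1)$ clauses here since the role of $\cnf$'s original clauses is abstracted into $f$. By the same count as in Section~\ref{sec:sat}, $\Phi$ has $O(\ell)$ clauses, each clause has at most $\Dsat=d+1$ literals, and each of $x_1,\ldots,x_{2\ell}$ occurs exactly twice (only in type-$(5)$ clauses, since inputs have outdegree $1$).

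For the identity, I would prove the two inequalities separately, mirroring the completeness and soundness arguments. For $\le$: fix the maximizer $x^\ast$ of $f(x,\ldots,x)$; set all of $x_1,\ldots,x_{2\ell}$ to $x^\ast$; then for any setting of $y_1,\ldots,y_k$ I claim $f(y_1,\ldots,y_\ell)+\unsat(\Phi,\ldots)\ge f(x^\ast,\ldots,x^\ast)$. This is exactly \clmref{claim:satisfying2} followed by \clmref{claim:Kunsatisfied}: first modify the $y$-assignment so that all type-$(2)$ clauses hold without increasing $\unsat(\Phi)$ and without changing $f$ (type-$(2)$ clauses involve only edge variables); then if $k_0$ denotes the number of outputs $y_j$ with $y_j\neq x^\ast$, \clmref{claim:Kunsatisfied} gives at least $k_0$ unsatisfied clauses among types $(3)$--$(5)$, while flipping those $k_0$ outputs from their values to $x^\ast$ changes $f$ by at most $k_0$ (the Lipschitz property of $f$), so $f(y_1,\ldots,y_\ell)\ge f(x^\ast,\ldots,x^\ast)-k_0$ and the bound follows. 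For $\ge$: given any $x_1,\ldots,x_{2\ell}$, let $x^\ast$ be the majority value among them, pick $S\subseteq U$ of size $\ell$ all set to $x^\ast$, route the $\ell$ vertex-disjoint paths from $S$ to $V$ guaranteed by \clmref{claim:oursuperconcentrators}, set the edge variables on those paths to $\true$ and all others to $\false$, set every $W$-vertex and every output $y_j$ to $x^\ast$; this satisfies all clauses of $\Phi$, so the min over $y$ of $f(y_1,\ldots,y_\ell)+\unsat(\Phi,\ldots)$ is at most $f(x^\ast,\ldots,x^\ast)\le\max_x f(x,\ldots,x)$.

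The main subtlety --- and the only place the argument differs from Section~\ref{sec:sat} --- is that in the $\le$ direction one must be careful about \emph{which} variables get flipped and in \emph{what order}, since here $f$ and $\unsat(\Phi,\cdot)$ depend on overlapping variable sets (the outputs $y_1,\ldots,y_\ell$), whereas in Section~\ref{sec:sat} the type-$(1)$ clauses depended on the $V_i$ but not on the $W_i$ or $E_i$. The clean way is: step~(a) apply the type-$(2)$ cleanup of \clmref{claim:satisfying2}, which touches only edge variables, hence leaves $f(y_1,\ldots,y_\ell)$ untouched and does not increase $\unsat(\Phi)$; step~(b) then the comparison becomes ``$f(y_1,\ldots,y_\ell)$ loses at most $k_0$ when we flip the $k_0$ bad outputs to $x^\ast$, while $\unsat(\Phi)$ has at least $k_0$ to spare by \clmref{claim:Kunsatisfied}.'' I expect this bookkeeping to be the crux; everything else is a direct transcription of Section~\ref{sec:sat} with the abstraction $f=$ ``number of satisfied clauses of $\cnf$ as a function of a fixed variable's $\ell$ occurrences,'' together with the observation (already noted in the paragraph preceding the lemma) that this $f$ has the stated Lipschitz property. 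Once \lemref{lem:hierarchy_univ} is in hand, applying it to each universal variable of $\cnf$ in turn, one variable at a time, yields $\cnf'$ and Equation~\eqref{eq:goal} by an induction over the universal variables, since each application preserves the quantifier structure and only replaces one universal variable by $2\ell$ universal variables and $k$ existential variables at the appropriate level.
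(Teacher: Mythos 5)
Your proposal is correct and is exactly the argument the paper intends: the paper itself only remarks that the proof "was essentially given already in Step 1," and your write-up is the faithful abstraction of that step, with the superconcentrator gadget's type (2)--(5) clauses forming $\Phi$, the completeness argument giving the $\geq$ direction via the majority value, and \clmref{claim:satisfying2} plus \clmref{claim:Kunsatisfied} combined with the Lipschitz property of $f$ giving the $\leq$ direction. Your observation that the type-(2) cleanup touches only edge variables (hence leaves $f$ unchanged) is the right way to handle the one bookkeeping point that is implicit in \secref{sec:sat}.
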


Using this lemma we can now describe our reduction. We are given a
$3$-$\CNF$ formula $\cnf$ on variable set $X_1 \cup Y_1 \cup \cdots
\cup X_r \cup Y_r$. We perform the following modifications for each
universal variable $x$. Let $i$ be such that $x \in X_i$ and $\ell$
be the number of times $x$ occurs in $\cnf$. Let $k$ and $\Phi$ be
as given by \lemref{lem:hierarchy_univ}. First, we replace $x \in
X_i$ with $2 \ell$ new variables $x_1,\ldots,x_{2\ell} \in X_i$ and
add $k$ new variables $y_1,\ldots,y_k$ to $Y_i$. Next, we replace
the $\ell$ occurrences of $x$ with $y_1,\ldots,y_\ell$. Finally, we
append $\Phi(x_1,\ldots,x_{2\ell},y_1,\ldots,y_k)$ to the formula.
Let $\cnf'$ be the resulting formula and $X'_1 \cup Y'_1 \cup \cdots
\cup X'_r \cup Y'_r$ be the resulting variable set. This completes
the description of the reduction.

Clearly, each universal variable in $\cnf'$ appears at most twice,
and moreover, the size of $\cnf'$ is linear in that of $\cnf$.
Therefore it remains to prove Equation~\eqref{eq:goal}. We do this by
showing that for each universal variable, the modifications we
perform leave the expression in Equation~\eqref{eq:goal} unchanged. So
let $\cnf$ be an arbitrary formula on some variable set $X_1 \cup
Y_1 \cup \cdots \cup X_r \cup Y_r$, and let $x \in X_i$ be a
universal variable with $\ell$ occurrences. It can be seen that our
goal is to show that\footnote{We remark that the fact that we write
$\max_{t_{X_i\setminus \{x\}}} \max_{x}$ as opposed to $\max_{x}
\max_{t_{X_i\setminus \{x\}}}$ will be crucial when we apply
\lemref{lem:hierarchy_univ}, as this prevents an additional
quantifier alternation.}
\begin{align*}
&\max_{t_{X_1}} \min_{t_{Y_1}} \cdots \max_{t_{X_i\setminus \{x\}}}
\max_{x} \min_{t_{Y_i}}  \cdots \max_{t_{X_r}} \min_{t_{Y_r}}
    {g(t_{X_1},t_{Y_1},\ldots,t_{X_i \setminus \{x\}},x,\ldots,x,t_{Y_i},\ldots,t_{X_r},t_{Y_r})} \\
  & \qquad = \max_{t_{X_1}} \min_{t_{Y_1}} \cdots \max_{t_{X_i\setminus \{x\}}} \max_{x_1,\ldots,x_{2\ell}} \min_{y_1,\ldots,y_k} \min_{t_{Y_i}}  \cdots \max_{t_{X_r}} \min_{t_{Y_r}} \\
    & \qquad \qquad ({g(t_{X_1},t_{Y_1},\ldots,t_{X_i \setminus \{x\}},y_1,\ldots,y_\ell,t_{Y_i},\ldots,t_{X_r},t_{Y_r})} +
       \unsat(\Phi,x_1,\ldots,x_{2\ell},y_1,\ldots,y_k))\enspace,
\end{align*}
where $g$ denotes the number of unsatisfied clauses in $\cnf$ under
the given assignment to all variables except $x$ and to all
occurrences of $x$, and $k$ and $\Phi$ are as in
\lemref{lem:hierarchy_univ}. Clearly it suffices to prove this
equality for any fixed setting to the variables quantified before
$x$, \ie,
\begin{align*}
&\max_{x} \min_{t_{Y_i}}  \cdots \max_{t_{X_r}} \min_{t_{Y_r}}
    {g(t_{X_1},t_{Y_1},\ldots,t_{X_i \setminus \{x\}},x,\ldots,x,t_{Y_i},\ldots,t_{X_r},t_{Y_r})} \\
  & \qquad = \max_{x_1,\ldots,x_{2\ell}} \min_{y_1,\ldots,y_k} \min_{t_{Y_i}}  \cdots \max_{t_{X_r}} \min_{t_{Y_r}} \\
    & \qquad \qquad ({g(t_{X_1},t_{Y_1},\ldots,t_{X_i \setminus \{x\}},y_1,\ldots,y_\ell,t_{Y_i},\ldots,t_{X_r},t_{Y_r})} +
       \unsat(\Phi,x_1,\ldots,x_{2\ell},y_1,\ldots,y_k))\enspace,
\end{align*}
but this follows from \lemref{lem:hierarchy_univ}.

We conclude that $\gapKDsatB_{\forall}[1-\eps,1]$ is $\Pi_{2r}$-hard
for some $\eps>0$.

\subsection{Step 2}

In this step we show that for any $\eps>0$ there exists an $\eps'>0$
such that $\gapKDsatB_{\forall}[1-\eps,1]$ reduces to
$\gapKDsatB[1-\eps',1]$ for some absolute constants $\Dsat, \Bsat$.
The following lemma is the analogue of \lemref{lem:hierarchy_univ}
for existential variables, and its proof essentially appeared
already in Step 2, \secref{sec:sat}.

\begin{lemma}\label{lem:hierarchy_exist}
For any large enough $\ell$ there exists a $2$-$\sat$ formula
$\Phi(y_1,\ldots,y_{\ell})$ on $\ell$ variables of size $O(\ell)$ in
which each variable appears at most $\Bsat$ times (for some absolute
constant $\Bsat$) such that the following holds. For any
integer-valued function $f$ on $\ell$ Boolean variables with the
property that flipping any one variable changes the value of $f$ by
at most one, we have that
$$ \min_y f(y,\ldots,y) = \min_{y_1,\ldots,y_\ell} (f(y_1,\ldots,y_\ell) + \unsat(\Phi,y_1,\ldots,y_\ell))\enspace,$$
where $y,y_1,\ldots,y_\ell$ are Boolean variables.
\end{lemma}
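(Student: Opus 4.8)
The plan is to adapt the construction from Step 2 of Section~\ref{sec:sat} in a way that makes it work as a ``black-box gadget'' attached to an arbitrary function $f$, exactly as Lemma~\ref{lem:hierarchy_univ} did for the universal case. Fix $\ell$ large enough, and let $G=(V,E)$ be the $14$-regular graph on vertex set of size $\ell \le |V| \le C_1 \ell$ guaranteed by Lemma~\ref{lem:expandersStep2} for the parameter $\ell$ (so that $|E(S,\overline S)| > \min(|S|,|\overline S|)$ for every nonempty $S \subsetneq V$). Identify $\ell$ of the vertices of $G$ with the ``interface'' variables $y_1,\dots,y_\ell$, and label the remaining $|V|-\ell$ vertices with auxiliary variables. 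The formula $\Phi(y_1,\dots,y_\ell)$ consists, for every edge $(y,y')\in E$, of the two clauses $(\lnot y \lor y')$ and $(y \lor \lnot y')$, which are simultaneously satisfied exactly when $y$ and $y'$ receive the same value. Since $G$ is $14$-regular, each variable of $\Phi$ appears in at most $2\cdot 14 = 28$ clauses, so $\Bsat = 28$ works; $\Phi$ has $O(|E|) = O(\ell)$ clauses, as required, and is a $2$-$\CNF$. Here I am abusing notation slightly by letting $\Phi$ range over the larger variable set $V$ rather than just $y_1,\dots,y_\ell$; strictly one states the lemma with the interface variables plus auxiliary ones, just as Lemma~\ref{lem:hierarchy_univ} does.

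For the identity $\min_y f(y,\dots,y) = \min_{y_1,\dots,y_\ell}\bigl(f(y_1,\dots,y_\ell) + \unsat(\Phi,y_1,\dots,y_\ell)\bigr)$ I would argue two inequalities. The direction ``$\ge$'' is immediate: given any $y$, setting all variables of $\Phi$ (interface and auxiliary) to that common value $y$ satisfies every consistency clause, so $\unsat(\Phi,\dots)=0$ and the right-hand side is at most $f(y,\dots,y)$; minimizing over $y$ gives ``$\ge$''. For ``$\le$'', take any assignment to the variables of $\Phi$. If the interface values $y_1,\dots,y_\ell$ are not all equal, let $S\subseteq V$ be the set of vertices (over the whole graph $G$) assigned $\true$; note $S$ is nonempty and proper provided the overall assignment is nonconstant, and if it is constant we are already in the previous case. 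Replace the assignment by the constant assignment equal to the majority value of $\{y_1,\dots,y_\ell\}$ (breaking ties arbitrarily). This changes the value of $f$ by at most $\min(|S\cap\{y_i\}|,|\overline S\cap\{y_i\}|)$ by the bounded-influence property of $f$, while it eliminates at least $|E(S,\overline S)|$ unsatisfied consistency clauses. Since $|E(S,\overline S)| > \min(|S|,|\overline S|) \ge \min(|S\cap\{y_i\}|,|\overline S\cap\{y_i\}|)$ by Lemma~\ref{lem:expandersStep2}, the quantity $f + \unsat(\Phi,\cdot)$ does not increase. Iterating (or simply observing that the constant majority assignment is a global improvement), the minimum on the right is attained at a constant assignment, giving ``$\le$''.

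The main subtlety — and the only place where care is needed relative to the Step~2 argument in Section~\ref{sec:sat} — is that the edge-counting argument must be carried out over the full vertex set $V$ of the expander, not merely over the $\ell$ interface vertices: auxiliary variables set inconsistently also cost unsatisfied clauses, but they do not directly feed into $f$, so replacing them by the global majority can only help. Concretely, one should check that applying the majority operation once to the entire variable set (rather than variable-by-variable) suffices; this is clean because $G$ is connected and the quantity is monotone under this single global move. Everything else is bookkeeping: verifying the clause count $O(\ell)$, the occurrence bound $\Bsat=28$, and that $\Phi$ is a genuine $2$-$\CNF$. With Lemma~\ref{lem:hierarchy_exist} in hand, the reduction in Step~2 for higher levels proceeds by applying it to each existential variable in turn — exactly as Lemma~\ref{lem:hierarchy_univ} was applied in Step~1 above — factoring the quantified expression through the fixed setting of all earlier-quantified variables and substituting the identity, which preserves Equation~\eqref{eq:goal} and hence the $\Pi_{2r}$-hardness.
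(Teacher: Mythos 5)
Your proposal is correct and is essentially the paper's own proof: the paper does not spell out an argument for Lemma~\ref{lem:hierarchy_exist} but states that it ``essentially appeared already in Step 2'' of \secref{sec:sat}, and your expander-plus-consistency-clauses gadget with the majority/cut-counting argument is exactly that proof, recast as a black-box identity. Your side remark about the auxiliary (non-interface) expander vertices correctly identifies the one point where the lemma's statement (``on $\ell$ variables'') is slightly looser than the construction, and your resolution matches what the paper implicitly does.
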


The reduction is as follows. We are given a $\Dsat$-$\CNF$ formula
$\cnf$ on variable set $X_1 \cup Y_1 \cup \cdots \cup X_r \cup Y_r$.
We perform the following modifications for each existential variable
$y$. Let $i$ be such that $y \in Y_i$ and $\ell$ be the number of
times $y$ occurs in $\cnf$. Let $\Phi$ be as given by
\lemref{lem:hierarchy_exist}. First, we replace $y \in Y_i$ with
$\ell$ variables $y_1,\ldots,y_{\ell} \in Y_i$. Next, we replace the
$\ell$ occurrences of $y$ with $y_1,\ldots,y_\ell$. Finally, we
append $\Phi(y_1,\ldots,y_{\ell})$ to the formula. This completes
the description of the reduction. The proof of correctness is
similar to the previous one and uses \lemref{lem:hierarchy_exist}.

\subsection{Step 3}

To complete the proof of \thmref{thm:generalization} we now modify
the formula so that the number of literals in each clause is exactly
$3$. Given a formula $\cnf$ on variable set $X_1 \cup Y_1 \cup
\cdots \cup X_r \cup Y_r$ we apply the modification of Step 3,
\secref{sec:sat}. We add the new existential variables to $Y_r$ and
the new universal variables to $X_r$. The proof of correctness is
easy and is omitted.

\subsection*{Acknowledgements}
We thank Ker-I Ko for sending us a copy of~\cite{KoLin95}. Some of
the early ideas that eventually led us to the construction of
\secref{sec:sat} were obtained while the second author was working
on~\cite{GuruswamiMR04} together with Daniele Micciancio and
Venkatesan Guruswami. We also thank two anonymous referees for their
helpful comments.

\bibliographystyle{tocplain}
\bibliography{v003a003}

\begin{tocauthors}
\begin{tocinfo}[ishay]
  Ishay Haviv \tocabout \\
  School of Computer Science \\
  Tel Aviv University, Tel Aviv, Israel
\end{tocinfo}
\begin{tocinfo}[oded]
  Oded Regev \tocabout \\
  Assistant professor \\
  School of Computer Science \\
  Tel Aviv University, Tel Aviv, Israel\\
  \url{http://www.cs.tau.ac.il/~odedr}
\end{tocinfo}
\begin{tocinfo}[amnon]
  Amnon Ta-Shma \tocabout \\
  Assistant professor \\
  School of Computer Science \\
  Tel Aviv University, Tel Aviv, Israel\\
  \url{http://www.cs.tau.ac.il/~amnon}
\end{tocinfo}
\end{tocauthors}

\begin{tocaboutauthors}
  \begin{tocabout}[ishay] {\sc Ishay Haviv} is a a graduate student at
    the \href{http://www.cs.tau.ac.il/}{School of Computer Science},
    \href{http://www.tau.ac.il/}{Tel Aviv University}, under the
    supervision of \href{http://www.cs.tau.ac.il/~odedr/}{Oded Regev}.
    He is interested in theoretical computer science, especially the
    complexity of lattice problems.  He also loves animals and acts
    for their rights.
\end{tocabout}
\begin{tocabout}[oded] {\sc Oded Regev} graduated from
  \href{http://www.tau.ac.il/}{Tel Aviv University} in 2001 under the
  supervision of \href{http://www.math.tau.ac.il/~azar/}{Yossi Azar}.
  Before joining Tel Aviv University, he spent two years as a
  postdoctoral fellow at the \href{http://www.ias.edu/}{Institute for
    Advanced Study}, Princeton, and one year at the
  \href{http://www.berkeley.edu/}{University of California, Berkeley}.
  His research interests include quantum computation, computational
  aspects of lattices, and other topics in theoretical computer
  science. He also enjoys hiking, running, and photography.
\end{tocabout}
\begin{tocabout}[amnon] {\sc Amnon Ta-Shma} graduated from the
  \href{http://www.huji.ac.il/}{Hebrew University} in 1996 under the
  supervision of \href{http://www.cs.huji.ac.il/~noam/}{Noam Nisan}.
  Before joining \href{http://www.tau.ac.il/}{Tel Aviv University}, he
  spent three years as a postdoctoral fellow at the
  \href{http://www.icsi.berkeley.edu/}{International Computer Science
    Institute}, Berkeley and the
  \href{http://www.berkeley.edu/}{University of California, Berkeley}.
  His research interests include the role of randomness in
  computation, quantum computation, and other topics in theoretical
  computer science. He also enjoys his family and thanks them for
  their love.
\end{tocabout}

\end{tocaboutauthors}

\end{document}